\providecommand{\U}[1]{\protect\rule{.1in}{.1in}}
\newtheorem{theorem}{Theorem}
\newtheorem{definition}[theorem]{Definition}
\newtheorem{proposition}[theorem]{Proposition}
\newtheorem{remark}[theorem]{Remark}
\newenvironment{proof}[1][Proof]{\noindent\textbf{#1.} }{\ \rule{0.5em}{0.5em}}
\numberwithin{equation}{section}
\begin{document}

\title{\textbf{Optimized quantum $f$-divergences and data processing}}
\author{Mark M. Wilde\thanks{Hearne Institute for Theoretical Physics, Department of
Physics and Astronomy, Center for Computation and Technology, Louisiana State
University, Baton Rouge, Louisiana 70803, USA}}
\maketitle

\begin{abstract}
The quantum relative entropy is a measure of the distinguishability of two
quantum states, and it is a unifying concept in quantum information theory:
many information measures such as entropy, conditional entropy, mutual
information, and entanglement measures can be realized from it. As such, there
has been broad interest in generalizing the notion to further understand its
most basic properties, one of which is the data processing inequality. The
quantum $f$-divergence of Petz is one generalization of the quantum relative
entropy, and it also leads to other relative entropies, such as the
Petz--R\'enyi relative entropies. In this paper, I introduce the optimized
quantum $f$-divergence as a related generalization of quantum relative
entropy. I prove that it satisfies the data processing inequality, and the
method of proof relies upon the operator Jensen inequality, similar to Petz's
original approach. Interestingly, the sandwiched R\'enyi relative entropies
are particular examples of the optimized $f$-divergence. Thus, one benefit of
this paper is that there is now a single, unified approach for establishing
the data processing inequality for both the Petz--R\'enyi and sandwiched
R\'enyi relative entropies, for the full range of parameters for which it is
known to hold. This paper discusses other aspects of the optimized
$f$-divergence, such as the classical case, the classical--quantum case, and
how to construct optimized $f$-information measures.

\end{abstract}

\section{Introduction}

The quantum relative entropy \cite{U62}\ is a foundational distinguishability
measure in quantum information theory. It is a function of two quantum states
and measures how well one can tell the two states apart by a
quantum-mechanical experiment. It is well known by now to be a parent quantity
for many other information measures, such as entropy, mutual information,
conditional entropy, and entanglement measures (see, e.g., \cite{D11,W17}).
One important reason for why it has found such widespread application is that
it satisfies a data-processing inequality \cite{Lindblad1975,U77}: it does not
increase under the action of a quantum channel on the two states. This can be
interpreted as saying that two quantum states do not become more
distinguishable if the same quantum channel is applied to them, and a precise
interpretation of this statement in terms of quantum hypothesis testing is
available in \cite{HP91,ON00,BS12}. Naturally, the notion of quantum relative
entropy generalizes its classical counterpart \cite{kullback1951}, which
enjoyed a rich and illustrious history prior to the development of quantum
relative entropy.

The wide interest in relative entropy sparked various researchers to
generalize and study it further, in an attempt to elucidate the fundamental
properties that govern its behavior. One notable generalization is R\'enyi's
relative entropy \cite{Renyi61}, but this was subsequently generalized even
further in the form of the $f$-divergence \cite{C67,Ali1966,M63}. For
probability distributions $\{p(x)\}_{x}$ and $\{q(x)\}_{x}$ and a convex
function $f$, the $f$-divergence is defined as%
\begin{equation}
\sum_{x}q(x)f(p(x)/q(x)),
\end{equation}
in the case that $p(x)=0$ for all $x$ such that $q(x)=0$. The resulting
quantity is then non-increasing under the action of a classical channel
$r(y|x)$ (a conditional probability distribution), that produces the output
distributions $\sum_{x}r(y|x)p(x)$ and $\sum_{x}r(y|x)q(x)$. Some years after
these developments, a quantum generalization of $f$-divergence appeared in
\cite{P85,P86}, going under the name of \textquotedblleft
quasi-entropy\textquotedblright\ as used in \cite{Wehrl1979}. In
\cite{P85,P86} and a later development \cite{TCR09}, the quantum
data-processing inequality was proved in full generality for arbitrary quantum
channels, whenever the underlying function $f$ is \textit{operator} convex. A
relatively large literature on the topic of quantum $f$-divergence has now
developed, so much that there are now many reviews and extensions of the
original idea
\cite{OP93,PR98,NP04,PS09,TCR09,S10,P10,P10a,HMPB11,HP12,M13,HM17}.

Interestingly, when generalizing a notion from classical to quantum
information theory, there is often more than one way to do so, and sometimes
there could even be an infinite number of ways to do so. This has to do with
the non-commutativity of quantum states, and for states of many-particle
quantum systems, entanglement is involved as well. For example, there are
several different ways that one could generalize the relative entropy to the
quantum case, and two prominent formulas were put forward in \cite{U62}\ and
\cite{Belavkin1982}. This added complexity for the quantum case could
potentially be problematic, but the typical way of determining on which
generalizations we should focus is to show that a given formula is the answer
to a meaningful operational task. The papers \cite{HP91,ON00} accomplished
this for the quantum relative entropy of \cite{U62}, and since then,
researchers have realized more and more just how foundational the formula of
\cite{U62} is. As a consequence, the formula of \cite{U62} is now known as
quantum relative entropy.

The situation becomes more intricate when it comes to quantum generalizations
of R\'enyi relative entropy. For many years, the Petz--R\'enyi relative
entropy of \cite{P85,P86} has been widely studied and given an operational
interpretation \cite{N06,Hay07}, again in the context of quantum hypothesis
testing (specifically, the error exponent problem). However, in recent years,
the sandwiched R\'enyi relative entropy of \cite{MDSFT13,WWY13}\ has gained
prominence, due to its role in establishing strong converses for communication
tasks \cite{WWY13,GW15,TWW14,CMW14,DingW15,WTB16}. The result of
\cite{MO13}\ solidified its fundamental meaning in quantum information
theory:\ these authors proved that it has an operational interpretation in the
strong converse exponent of quantum hypothesis testing. As such, the situation
we are faced with is that there are two generalizations of R\'enyi relative
entropy that should be considered in quantum information theory, due to their
operational role mentioned above. There are further generalizations of the
aforementioned quantum R\'enyi relative entropies \cite{AD15}, but their
operational meaning (and thus their role in quantum information theory) is unclear.

The same work that introduced the Petz--R\'enyi relative entropy also
introduced a quantum generalization of the notion of $f$-divergence
\cite{P85,P86} (see also \cite{HMPB11}), with the Petz--R\'enyi relative
entropy being a particular example. Since then, other quantum $f$-divergences
have appeared \cite{PR98,HM17}, now known as minimal and maximal
$f$-divergences \cite{M13,HM17}. However, hitherto it has not been known how
the sandwiched R\'enyi relative entropy fits into the paradigm of quantum
$f$-divergences. In fact, the authors of \cite{HMPB11} declared in their
Example~2.11\ that a particular instance of the sandwiched R\'enyi relative
entropy is not a quantum $f$-divergence, suggesting that it would not be
possible to express it as such.

In this paper, I modify Petz's definition of quantum $f$-divergence
\cite{P85,P86,HMPB11}, by allowing for a particular optimization (see
Definition~\ref{def:opt-f-div}\ for details of the modification). As such, I
call the resulting quantity the \textit{optimized} quantum $f$-divergence. I
prove that it obeys a quantum data processing inequality, and as such, my
perspective is that it deserves to be considered as another variant of the
quantum $f$-divergence, in addition to the original, the minimal, and the
maximal. Interestingly, the sandwiched R\'enyi relative entropy is directly
related to the optimized quantum $f$-divergence, thus bringing the sandwiched
quantity into the $f$-divergence formalism.

One benefit of the results of this paper is that there is now a single,
unified approach for establishing the data-processing inequality for both the
Petz--R\'enyi relative entropy and the sandwiched R\'enyi relative entropy,
for the full R\'enyi parameter ranges for which it is known to hold. This
unified approach is based on Petz's original approach that employed the
operator Jensen inequality \cite{HP03}, which is the statement that
\begin{equation}
f(V^{\dag}X V) \leq V^{\dag}f(X) V,
\end{equation}
where $f$ is an operator convex function defined on an interval $I$, $X$ is a
Hermitian operator with spectrum in $I$, and $V$ is an isometry. This unified
approach is useful for presenting a succint proof of the data processing
inequality for both quantum R\'enyi relative entropy families.

In the rest of the paper, I begin by defining the optimized quantum
$f$-divergence and then discuss various alternative ways of writing it,
including its representation in terms of the relative modular operator
formalism. In Section~\ref{sec:DP}, I prove that the optimized $f$-divergence
satisfies the quantum data processing inequality whenever the underlying
function $f$ is operator anti-monotone with domain $(0,\infty)$ and range
$\mathbb{R}$. The proof of quantum data processing has two steps: I first
prove that the optimized quantum $f$-divergence is invariant under isometric
embeddings and then show that it is monotone non-increasing under the action
of a partial trace. By the Stinespring dilation theorem \cite{S55}, these two
steps establish data processing under general quantum channels. The core tool
underlying both steps is the operator Jensen inequality \cite{HP03}. The proof
of monotonicity under partial trace features some novel aspects for handling
non-invertible operators. In Section~\ref{sec:examples-opt-f-div}, I show how
the quantum relative entropy and the sandwiched R\'enyi relative entropies are
directly related to the optimized quantum $f$-divergence.
Section~\ref{sec:petz-f}\ then discusses the relation between Petz's
$f$-divergence and the optimized one. Section~\ref{sec:c-cq}\ shows how the
optimized $f$-divergence simplifies when the operators involved have a
classical or classical--quantum form. In Section~\ref{sec:f-info},\ I discuss
how to construct several information measures from the optimized
$f$-divergence, which could potentially find application in quantum
information theory or resource theories. I finally conclude in
Section~\ref{sec:conclusion}\ with a summary and some open directions.

\section{Optimized quantum $f$-divergence}

Let us begin by formally defining the optimized quantum $f$-divergence:

\begin{definition}
[Optimized quantum $f$-divergence]\label{def:opt-f-div}Let $f$ be a function
with domain $(0,\infty)$ and range $\mathbb{R}$. For positive semi-definite
operators $X$ and $Y$ acting on a Hilbert space $\mathcal{H}_{S}$, we define
the \textit{optimized} quantum $f$-divergence as%
\begin{equation}
\widetilde{Q}_{f}(X\Vert Y)\equiv\sup_{\tau>0,\ \operatorname{Tr}\{\tau
\}\leq1,\ \varepsilon>0}\widetilde{Q}_{f}(X\Vert Y+\varepsilon\Pi_{Y}^{\perp
};\tau), \label{eq:q-f-Y-not-invertible}%
\end{equation}
where $\widetilde{Q}_{f}(X\Vert Z;\tau)$ is defined for positive definite $Z$
and $\tau$ acting on $\mathcal{H}_{S}$ as%
\begin{align}
\widetilde{Q}_{f}(X\Vert Z;\tau)  &  \equiv\langle\varphi^{X}|_{S\hat{S}%
}f(\tau_{S}^{-1}\otimes Z_{\hat{S}}^{T})|\varphi^{X}\rangle_{S\hat{S}%
},\label{eq:q_f_tau}\\
|\varphi^{X}\rangle_{S\hat{S}}  &  \equiv(X_{S}^{1/2}\otimes I_{\hat{S}%
})|\Gamma\rangle_{S\hat{S}}.
\end{align}
In the above, $\Pi_{Y}^{\perp}$ denotes the projection onto the kernel of $Y$,
$\mathcal{H}_{\hat{S}}$ is an auxiliary Hilbert space isomorphic to
$\mathcal{H}_{S}$,%
\begin{equation}
\left\vert \Gamma\right\rangle _{S\hat{S}}\equiv\sum_{i=1}^{\left\vert
S\right\vert }\left\vert i\right\rangle _{S}\left\vert i\right\rangle
_{\hat{S}},
\end{equation}
for orthonormal bases $\{\left\vert i\right\rangle _{S}\}_{i=1}^{\left\vert
S\right\vert }$ and $\{\left\vert i\right\rangle _{\hat{S}}\}_{i=1}^{|\hat
{S}|}$, and the $T$ superscript indicates transpose with respect to the basis
$\{\left\vert i\right\rangle _{\hat{S}}\}_{i}$.
\end{definition}

\begin{remark}
Note that the expression in \eqref{eq:q-f-Y-not-invertible} simplifies
considerably in the case that $Y$ is positive definite. That is, it reduces to
the following simpler expression in the case that $Y>0$:
\begin{equation}
\widetilde{Q}_{f}(X\Vert Y) = \sup_{\tau>0,\ \operatorname{Tr}\{\tau\}\leq1}
\langle\varphi^{X}|_{S\hat{S}}f(\tau_{S}^{-1}\otimes Y_{\hat{S}}^{T}%
)|\varphi^{X}\rangle_{S\hat{S}}.
\end{equation}
As such, the optimized $f$-divergence in \eqref{eq:q-f-Y-not-invertible}
represents a modification of Petz's quantum $f$-divergence, a topic that I
discuss in more detail in Section~\ref{sec:petz-f}. The intention of the more
general definition in \eqref{eq:q-f-Y-not-invertible} is to provide a
consistent way of defining the optimized $f$-divergence in the case that $Y$
is not positive semi-definite.
\end{remark}

The case of greatest interest for us here is when the underlying function $f$
is operator anti-monotone; i.e., for Hermitian operators $A$ and $B$, the
function $f$ is such that $A\leq B\Rightarrow f(B)\leq f(A)$ (see, e.g.,
\cite{B97}). This property is rather strong, but there are several functions
of interest in quantum-physical applications that obey it (see
Section~\ref{sec:examples-opt-f-div}). One critical property of an operator
anti-monotone function with domain $(0,\infty)$ and range $\mathbb{R}$ is that
it is also operator convex and continuous (see, e.g., \cite{H13}). In this
case, we have the following proposition:

\begin{proposition}
Let $f$ be an operator anti-monotone function with domain $(0,\infty)$ and
range $\mathbb{R}$. For positive semi-definite operators $X$ and $Y$ acting on
a Hilbert space $\mathcal{H}_{S}$, the following equality holds%
\begin{equation}
\widetilde{Q}_{f}(X\Vert Y)=\sup_{\tau>0,\,\operatorname{Tr}\{\tau\}=1}%
\lim_{\varepsilon\searrow0}\widetilde{Q}_{f}(X\Vert Y+\varepsilon\Pi
_{Y}^{\perp};\tau),
\end{equation}
and furthermore, the function $\widetilde{Q}_{f}(X\Vert Y+\varepsilon\Pi
_{Y}^{\perp};\tau)$ is concave in $\tau$. Finally, for positive semi-definite
$Y_{1}$ and $Y_{2}$ such that $Y_{1}\leq Y_{2}$, we have that%
\begin{equation}
\widetilde{Q}_{f}(X\Vert Y_{1})\geq\widetilde{Q}_{f}(X\Vert Y_{2}).
\label{eq:dominating-ineq}%
\end{equation}

\end{proposition}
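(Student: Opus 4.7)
All three claims rest on operator anti-monotonicity of $f$ on $(0,\infty)$ combined with L\"owner's theorem, which asserts that every operator monotone function on $(0,\infty)$ is operator concave. For the first displayed equation, I would observe that as $\varepsilon$ decreases, $Y + \varepsilon \Pi_{Y}^{\perp}$ decreases in the operator order, hence so does $\tau^{-1} \otimes (Y + \varepsilon \Pi_{Y}^{\perp})^{T}$; operator anti-monotonicity of $f$ then makes $\widetilde{Q}_{f}(X\Vert Y + \varepsilon \Pi_{Y}^{\perp};\tau)$ non-decreasing as $\varepsilon \searrow 0$, so $\sup_{\varepsilon>0}$ equals $\lim_{\varepsilon \searrow 0}$. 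For the trace normalization, any $\tau > 0$ with $c := \operatorname{Tr}\{\tau\} < 1$ is dominated by $\tau' := \tau/c$ (unit trace, $\tau' \geq \tau$), so $(\tau')^{-1} \leq \tau^{-1}$, and the same anti-monotonicity yields $\widetilde{Q}_{f}(X\Vert Y + \varepsilon \Pi_{Y}^{\perp};\tau') \geq \widetilde{Q}_{f}(X\Vert Y + \varepsilon \Pi_{Y}^{\perp};\tau)$, so the supremum is attained with $\operatorname{Tr}\{\tau\} = 1$.

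For concavity in $\tau$, the key observation is that $g(u) := f(u^{-1})$ is the composition of the two operator anti-monotone maps $u \mapsto u^{-1}$ and $f$ on $(0,\infty)$, hence is operator monotone and, by L\"owner, operator concave. For each $z > 0$ the map $\tau \mapsto f(z\tau^{-1}) = g(\tau/z)$ is then operator concave in $\tau$ by an affine change of variable. Spectrally decomposing $Z^{T} := (Y + \varepsilon \Pi_{Y}^{\perp})^{T} = \sum_{j} z_{j} P_{j}$ and using that $\tau^{-1}\otimes I$ and $I \otimes Z^{T}$ commute, one has
\begin{equation*}
f(\tau^{-1} \otimes Z^{T}) = \sum_{j} f(z_{j}\tau^{-1}) \otimes P_{j},
\end{equation*}
which is operator concave in $\tau$ as a sum of operator concave terms tensored with the positive operators $P_{j}$. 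The linear functional $\langle \varphi^{X}| \cdot |\varphi^{X}\rangle_{S\hat{S}}$ then preserves scalar concavity, giving the claim.

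For the dominating inequality, the case in which $Y_{1}, Y_{2} > 0$ are both positive definite is immediate from the remark: $Y_{1} \leq Y_{2}$ gives $\tau^{-1}\otimes Y_{1}^{T} \leq \tau^{-1}\otimes Y_{2}^{T}$, anti-monotonicity of $f$ reverses the inequality, and applying $\langle \varphi^{X}|\cdot|\varphi^{X}\rangle$ followed by $\sup_{\tau}$ preserves it. To handle the general case, I would establish the auxiliary identity
\begin{equation*}
\widetilde{Q}_{f}(X\Vert Y) = \sup_{\delta > 0} \widetilde{Q}_{f}(X\Vert Y + \delta I),
\end{equation*}
by showing that for each fixed $\tau$, both regularizations $Y + \varepsilon \Pi_{Y}^{\perp}$ and $Y + \delta I$ yield the same limit of $\widetilde{Q}_{f}$ as the regularization parameter vanishes. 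This reduces to a block-diagonal analysis with respect to the splitting $\operatorname{supp}(Y^{T}) \oplus \ker(Y^{T})$: on the support block the two operators agree with $Y^{T}$ up to an $O(\varepsilon)$ or $O(\delta)$ perturbation, so continuity of $f$ on $(0,\infty)$ handles the limit, while on the kernel block both reduce to $f$ evaluated on $\varepsilon\tau^{-1}$ (resp.\ $\delta\tau^{-1}$) against the same weight from $|\varphi^{X}\rangle$, yielding a common (possibly $+\infty$) value determined by $f(0^{+})$. Applying the positive definite case to $Y_{1}+\delta I \leq Y_{2} + \delta I$ and taking $\sup$ over $\delta$ then gives \eqref{eq:dominating-ineq}. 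The most delicate step is this limiting reduction: one must carefully track the possibility that $f(0^{+}) = +\infty$ and that $|\varphi^{X}\rangle$ has nonzero projection onto the kernel block, so that the two distinct regularizations genuinely produce the same (possibly divergent) limit.
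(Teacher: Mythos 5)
Your argument is correct. For the first two claims you follow essentially the same route as the paper: the reduction to unit-trace $\tau$ and the replacement of $\sup_{\varepsilon>0}$ by $\lim_{\varepsilon\searrow 0}$ are verbatim the paper's operator anti-monotonicity argument, and your concavity proof rests on the same key fact (operator monotonicity, hence operator concavity, of $g(u)=f(u^{-1})$); you merely make the step explicit by spectrally decomposing $(Y+\varepsilon\Pi_Y^\perp)^T=\sum_j z_j P_j$ and writing $f(\tau^{-1}\otimes Z^T)=\sum_j f(z_j\tau^{-1})\otimes P_j$, whereas the paper composes $g$ directly with the linear map $\tau\mapsto\tau\otimes (Y^T)^{-1}$. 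Where you genuinely diverge is the dominating inequality \eqref{eq:dominating-ineq}. The paper compares $Y_1+\varepsilon\Pi_{Y_1}^{\perp}\leq Y_2+\varepsilon\Pi_{Y_1}^{\perp}$ using the \emph{same} projector $\Pi_{Y_1}^{\perp}$ on both sides, applies anti-monotonicity, and takes the limit $\varepsilon\searrow0$; this leaves implicit the step identifying $\lim_{\varepsilon\searrow0}$ of the $\Pi_{Y_1}^{\perp}$-regularization of $Y_2$ with the $\Pi_{Y_2}^{\perp}$-regularization appearing in the definition (which holds because $\Pi_{Y_1}^{\perp}-\Pi_{Y_2}^{\perp}$ projects into $\operatorname{supp}(Y_2)$, where $f$ is continuous on the spectrum). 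You instead prove the auxiliary identity $\widetilde{Q}_f(X\Vert Y)=\sup_{\delta>0}\widetilde{Q}_f(X\Vert Y+\delta I)$ and reduce everything to the positive definite case. Both work: your route costs an extra lemma but makes the ``regularizer-independence of the limit'' explicit (including the delicate $f(0^{+})=+\infty$ bookkeeping on the kernel block, which you correctly flag), while the paper's is shorter but buries exactly that same limiting comparison in one sentence. One cosmetic point: the supremum over unit-trace $\tau$ is not literally \emph{attained}, only equal to the supremum over subnormalized $\tau$; your meaning is clear but the wording should be adjusted.
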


\begin{proof}
To see that we can restrict the optimization over $\tau$ to $\tau$ satisfying
$\operatorname{Tr}\{\tau\}=1$, let $\tau$ be such that $\tau>0$ and
$\operatorname{Tr}\{\tau\}<1$. Then%
\begin{equation}
\tau_{S}^{-1}\otimes Y_{\hat{S}}^{T}=\frac{1}{\operatorname{Tr}\{\tau_{S}%
\}}\left[  \frac{\tau_{S}}{\operatorname{Tr}\{\tau_{S}\}}\right]  ^{-1}\otimes
Y_{\hat{S}}^{T}\geq\left[  \frac{\tau_{S}}{\operatorname{Tr}\{\tau_{S}%
\}}\right]  ^{-1}\otimes Y_{\hat{S}}^{T},
\end{equation}
and so%
\begin{equation}
\widetilde{Q}_{f}(X\Vert Y+\varepsilon\Pi_{Y}^{\perp};\tau)\leq\widetilde
{Q}_{f}(X\Vert Y+\varepsilon\Pi_{Y}^{\perp};\tau/\operatorname{Tr}\{\tau\})
\end{equation}
from the operator anti-monotonicity of $f$. Changing $\sup_{\varepsilon>0}$ to
$\lim_{\varepsilon\searrow0}$ follows as well from operator anti-monotonicity
of $f$. Let $\varepsilon_{2}\geq\varepsilon_{1}>0$. Then $Y+\varepsilon_{1}%
\Pi_{Y}^{\perp}\leq Y+\varepsilon_{2}\Pi_{Y}^{\perp}$ and so $\widetilde
{Q}_{f}(X\Vert Y+\varepsilon_{2}\Pi_{Y}^{\perp};\tau)\leq\widetilde{Q}%
_{f}(X\Vert Y+\varepsilon_{1}\Pi_{Y}^{\perp};\tau)$. So then the highest value
of $\widetilde{Q}_{f}(X\Vert Y+\varepsilon\Pi_{Y}^{\perp};\tau)$ is achieved
in the limit as $\varepsilon\searrow0$, where we have also invoked the
continuity of $f$. We note that this limit could evaluate to infinity.

Concavity in $\tau$ follows because%
\begin{equation}
f(\tau_{S}^{-1}\otimes Y_{\hat{S}}^{T})=f\!\left(  \left[  \tau_{S}%
\otimes\left(  Y_{\hat{S}}^{T}\right)  ^{-1}\right]  ^{-1}\right)  ,
\end{equation}
and the function $f(x^{-1})$ is operator monotone on $(0,\infty)$, given that
it is the composition of the operator anti-monotone function $x^{-1}$ with
domain $(0,\infty)$ and range $(0,\infty)$ and the function $f$, taken to be
operator anti-monotone on $(0,\infty)$ by hypothesis. Since $f(x^{-1})$ is
operator monotone on $(0,\infty)$, it is operator concave (see, e.g.,
\cite{H13}).

The dominating property in \eqref{eq:dominating-ineq} follows from the fact
that $f$ is operator anti-monotone on $(0,\infty)$, which implies the
following for a fixed $\varepsilon>0$ and $\tau_{S}$ such that $\tau_{S}>0$
and $\operatorname{Tr}\{\tau_{S}\}\leq1$:%
\begin{equation}
f(\tau_{S}^{-1}\otimes(Y_{1}+\varepsilon\Pi_{Y_{1}}^{\perp})_{\hat{S}}%
^{T})\geq f(\tau_{S}^{-1}\otimes(Y_{2}+\varepsilon\Pi_{Y_{1}}^{\perp}%
)_{\hat{S}}^{T}).
\end{equation}
We arrive at the inequality in \eqref{eq:dominating-ineq} after sandwiching by
$|\varphi^{X}\rangle_{S\hat{S}}$, taking the limit as $\varepsilon\searrow0$,
and taking a supremum over $\tau$.
\end{proof}

\bigskip

For $X$ positive semi-definite and $Y$ and $\tau$ positive definite, with
spectral decompositions of $Y$ and $\tau$ given as%
\begin{equation}
Y=\sum_{y}\mu_{y}|\phi^{y}\rangle\langle\phi^{y}|,\qquad\tau=\sum_{t}\nu
_{t}|\psi^{t}\rangle\langle\psi^{t}|,
\end{equation}
we can write%
\begin{align}
\widetilde{Q}_{f}(X\Vert Y;\tau)  &  =\sum_{y,t}f(\mu_{y}\nu_{t}%
^{-1})\operatorname{Tr}\{X^{1/2}|\phi^{y}\rangle\langle\phi^{y}|X^{1/2}%
|\psi^{t}\rangle\langle\psi^{t}|\}\label{eq:f-divergence-spec-form}\\
&  =\sum_{y,t}f(\mu_{y}\nu_{t}^{-1})|\langle\phi^{y}|X^{1/2}|\psi^{t}%
\rangle|^{2},
\end{align}
by using the facts that%
\begin{align}
f(\tau_{S}^{-1}\otimes Y_{\hat{S}}^{T})  &  =\sum_{y,t}f(\mu_{y}\nu_{t}%
^{-1})|\psi^{t}\rangle\langle\psi^{t}|_{S}\otimes|\phi^{y}\rangle\langle
\phi^{y}|_{\hat{S}}^{T},\\
(I_{S}\otimes Z_{\hat{S}}^{T})|\Gamma\rangle_{S\hat{S}}  &  =(Z_{S}\otimes
I_{\hat{S}})|\Gamma\rangle_{S\hat{S}},\label{eq:transpose-trick}\\
\langle\Gamma|_{S\hat{S}}(Z_{S}\otimes I_{\hat{S}})|\Gamma\rangle_{S\hat{S}}
&  =\operatorname{Tr}\{Z_{S}\}, \label{eq:gamma-to-trace}%
\end{align}
for any square operator $Z$ acting on $\mathcal{H}_{S}$. The formula in
\eqref{eq:f-divergence-spec-form}\ is helpful in some parts of our analysis below.

We can also phrase Definition~\ref{def:opt-f-div} in terms of the relative
modular operator formalism, which is employed in many of the works on
quasi-entropy (many details of this formalism in the context of
quasi-entropies are available in \cite{HMPB11}). Let $P$ be a positive
semi-definite operator, and let $R$ be a positive definite operator. Defining
the action of the relative modular operator $\Delta(P/R)$\ on an operator $X$
as%
\begin{equation}
\Delta(P/R)(X)=PXR^{-1},
\end{equation}
and the Hilbert--Schmidt inner product $\left\langle W,Z\right\rangle
=\operatorname{Tr}\{W^{\dag}Z\}$, we can write the quantity $\widetilde{Q}%
_{f}(X\Vert Y;\tau)$ underlying $\widetilde{Q}_{f}(X\Vert Y)$ in terms of the
relative modular operator as%
\begin{equation}
\widetilde{Q}_{f}(X\Vert Y;\tau)=\langle X^{1/2},f(\Delta(Y/\tau
))(X^{1/2})\rangle. \label{eq:rel-mod-op}%
\end{equation}
The definition of optimized quantum $f$-divergence following from plugging
\eqref{eq:rel-mod-op} into \eqref{eq:q-f-Y-not-invertible} can be used in more
general contexts than those considered in the present paper (for example, in
the context of von Neumann algebras). However, in this work, we find it more
convenient to work with the expression in \eqref{eq:q_f_tau} (see
\cite{TCR09,S10} for a similar approach), and throughout this paper, we work
in the setting of finite-dimensional quantum systems.

\section{Quantum data processing}

\label{sec:DP}Our first main objective is to prove that $\widetilde{Q}%
_{f}(X\Vert Y)$ deserves the name \textquotedblleft$f$%
-divergence\textquotedblright\ or \textquotedblleft$f$-relative
entropy,\textquotedblright\ i.e., that it is monotone non-increasing under the
action of a completely positive, trace-preserving map $\mathcal{N}$:%
\begin{equation}
\widetilde{Q}_{f}(X\Vert Y)\geq\widetilde{Q}_{f}(\mathcal{N}(X)\Vert
\mathcal{N}(Y)). \label{eq:mono-DP}%
\end{equation}
Such a map $\mathcal{N}$ is also called a quantum channel, due to its purpose
in quantum physics as modeling the physical evolution of the state of a
quantum system. In quantum information-theoretic contexts, the inequality in
\eqref{eq:mono-DP} is known as the quantum data processing inequality.
According to the Stinespring dilation theorem \cite{S55}, every quantum
channel can be realized by an isometric embedding of its input into a tensor
product of the channel's output Hilbert space and an auxiliary Hilbert space,
followed by a partial trace over the auxiliary Hilbert space. That is, to
every quantum channel $\mathcal{N}_{S\rightarrow B}$, there exists an isometry
$U_{S\rightarrow BE}^{\mathcal{N}}$ such that%
\begin{equation}
\mathcal{N}_{S\rightarrow B}(X_{S})=\operatorname{Tr}_{E}\{U_{S\rightarrow
BE}^{\mathcal{N}}X_{S}\left(  U_{S\rightarrow BE}^{\mathcal{N}}\right)
^{\dag}\}.
\end{equation}
As such, we can prove the inequality in \eqref{eq:mono-DP} in two steps:

\begin{enumerate}
\item \textit{Isometric invariance}:\ First show that%
\begin{equation}
\widetilde{Q}_{f}(X\Vert Y)=\widetilde{Q}_{f}(UXU^{\dag}\Vert UYU^{\dag})
\end{equation}
for any isometry $U$ and any positive semi-definite $X$ and $Y$.\footnote{The importance of establishing isometric invariance of
quasi-entropies has been stressed in \cite{TCR09} and \cite[Appendix~B]{T12}.}

\item \textit{Monotonicity under partial trace}:\ Then show that%
\begin{equation}
\widetilde{Q}_{f}(X_{AB}\Vert Y_{AB})\geq\widetilde{Q}_{f}(X_{A}\Vert Y_{A})
\end{equation}
for positive semi-definite operators $X_{AB}$ and $Y_{AB}$ acting on the
tensor-product Hilbert space $\mathcal{H}_{A}\otimes\mathcal{H}_{B}$, with
$X_{A}=\operatorname{Tr}_{B}\{X_{AB}\}$ and $Y_{A}=\operatorname{Tr}%
_{B}\{Y_{AB}\}$.
\end{enumerate}

\noindent So we proceed and first prove isometric invariance:

\begin{proposition}
[Isometric invariance]\label{prop:iso-inv}Let $U:\mathcal{H}_{S}%
\rightarrow\mathcal{H}_{R}$ be an isometry, let $X$ and $Y$ be positive
semi-definite operators, and let $f$ be an operator anti-monotone function
with domain $(0,\infty)$ and range $\mathbb{R}$. Then the following equality
holds%
\begin{equation}
\widetilde{Q}_{f}(X\Vert Y)=\widetilde{Q}_{f}(UXU^{\dag}\Vert UYU^{\dag}).
\end{equation}

\end{proposition}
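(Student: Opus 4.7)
The plan is to establish the two inequalities $\widetilde{Q}_f(UXU^{\dagger}\Vert UYU^{\dagger}) \leq \widetilde{Q}_f(X\Vert Y)$ and its reverse separately, both via the operator Jensen inequality. The pivotal observation is a structural identity for the purification: after choosing orthonormal bases of $\mathcal{H}_R$ and $\mathcal{H}_{\hat{R}}$ whose first $\dim S$ vectors are the images of the standard bases of $\mathcal{H}_S$ and $\mathcal{H}_{\hat{S}}$ under $U$ and the analogous ``hat'' isomorphism, respectively, a direct computation gives $|\varphi^{UXU^{\dagger}}\rangle_{R\hat{R}} = W |\varphi^{X}\rangle_{S\hat{S}}$ for the isometry $W \equiv U \otimes V$, with $V : \mathcal{H}_{\hat{S}} \to \mathcal{H}_{\hat{R}}$ acting as $|i\rangle_{\hat{S}} \mapsto |i\rangle_{\hat{R}}$. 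In such bases, $M \equiv UYU^{\dagger} + \varepsilon \Pi_{UYU^{\dagger}}^{\perp}$ is block-diagonal with ``physical'' block $Y + \varepsilon \Pi_Y^{\perp}$ and a block $\varepsilon I$ on the orthogonal complement of the image of $U$; consequently $V^{\dagger} M^T V = (Y + \varepsilon \Pi_Y^{\perp})^T$ and $V^{\dagger}(M^{-1})^T V = (Y+\varepsilon\Pi_Y^{\perp})^{-T}$.

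For the direction $\widetilde{Q}_f(UXU^{\dagger}\Vert UYU^{\dagger}) \leq \widetilde{Q}_f(X\Vert Y)$, I would use the preceding proposition's observation that $g(x) \equiv f(x^{-1})$ is operator concave on $(0,\infty)$ and rewrite the integrand via $f(A^{-1}) = g(A)$ as
\begin{equation}
\widetilde{Q}_f(UXU^{\dagger}\Vert M;\tau_R) = \langle \varphi^X | W^{\dagger} g(\tau_R \otimes (M^{-1})^T) W|\varphi^X\rangle.
\end{equation}
The operator Jensen inequality for operator concave $g$ gives $W^{\dagger} g(A) W \leq g(W^{\dagger} A W)$; combined with the compression identity $W^{\dagger}(\tau_R \otimes (M^{-1})^T) W = (U^{\dagger} \tau_R U) \otimes (Y+\varepsilon\Pi_Y^{\perp})^{-T}$ from the setup, this yields $\widetilde{Q}_f(UXU^{\dagger}\Vert M;\tau_R) \leq \widetilde{Q}_f(X\Vert Y+\varepsilon\Pi_Y^{\perp}; \tau_S)$ with $\tau_S \equiv U^{\dagger} \tau_R U > 0$ and $\operatorname{Tr}\{\tau_S\} \leq \operatorname{Tr}\{\tau_R\} \leq 1$. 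Taking the supremum over $\tau_R$ and $\varepsilon$ then completes this direction.

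For the reverse direction $\widetilde{Q}_f(X \Vert Y) \leq \widetilde{Q}_f(UXU^{\dagger}\Vert UYU^{\dagger})$, I would go the other way: given any $\tau_S > 0$ with $\operatorname{Tr}\{\tau_S\} < 1$, construct a matching $\tau_R \equiv U \tau_S U^{\dagger} + c^{-1} \Pi_R^{\perp}$ with $c$ large enough that $\operatorname{Tr}\{\tau_R\} \leq 1$, which is positive definite on $\mathcal{H}_R$. Either by direct spectral evaluation using the formula in \eqref{eq:f-divergence-spec-form}---exploiting that $UX^{1/2}U^{\dagger}$ annihilates vectors orthogonal to the image of $U$, so the ``extra'' $\varepsilon I$ block contributes nothing---or by applying the operator Jensen inequality for the operator convex $f$ to the same $W$ and noting $U^{\dagger} \tau_R^{-1} U = \tau_S^{-1}$ for this choice, one obtains $\widetilde{Q}_f(UXU^{\dagger}\Vert M;\tau_R) \geq \widetilde{Q}_f(X\Vert Y+\varepsilon\Pi_Y^{\perp};\tau_S)$. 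Taking the supremum then yields the inequality, with the boundary case $\operatorname{Tr}\{\tau_S\} = 1$ handled by the scaling/continuity argument of the preceding proposition.

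The principal obstacle is the bookkeeping around the two independent regularisations: the $\varepsilon \Pi_{Y}^{\perp}$ on the $Y$-side (already packaged into the definition of $\widetilde{Q}_f$) and the $c^{-1}\Pi_R^{\perp}$ padding on the $\tau$-side (needed to keep $\tau_R$ strictly positive definite on all of $\mathcal{H}_R$). Once one verifies in the compatibly chosen bases that $V$ compresses $M^T$ (resp.\ $(M^{-1})^T$) back to $(Y+\varepsilon\Pi_Y^{\perp})^T$ (resp.\ $(Y+\varepsilon\Pi_Y^{\perp})^{-T}$), both directions reduce to a single application of operator Jensen, and the remaining limits are handled by the continuity established in the previous proposition.
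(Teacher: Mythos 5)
Your proposal is correct and follows essentially the same route as the paper: the factorization $|\varphi^{UXU^{\dag}}\rangle_{R\hat{R}}=(U\otimes V)|\varphi^{X}\rangle_{S\hat{S}}$, the operator Jensen inequality applied to the operator concave $g(x)=f(x^{-1})$ for the direction $\widetilde{Q}_{f}(UXU^{\dag}\Vert UYU^{\dag})\leq\widetilde{Q}_{f}(X\Vert Y)$, and an explicit lift of $\tau_{S}$ to a positive definite $\tau_{R}$ (whose orthogonal-complement block is annihilated by $UX^{1/2}U^{\dag}$) for the reverse direction. The only differences are cosmetic bookkeeping: the paper pads $\tau$ with a $\delta$-weighted mixture and takes $\delta\searrow0$ where you pad with $c^{-1}\Pi_{R}^{\perp}$ and restrict to $\operatorname{Tr}\{\tau_{S}\}<1$, and the paper treats invertible $Y$ first before regularizing, whereas you carry $\varepsilon\Pi_{Y}^{\perp}$ throughout.
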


\begin{proof}
In the case that $\dim(\mathcal{H}_{S})=\dim(\mathcal{H}_{R})$, the statement
holds trivially because $U$ is a unitary and then $\mathcal{H}_{S}$ and
$\mathcal{H}_{R}$ are isomorphic. So we focus on the case in which
$\dim(\mathcal{H}_{S})<\dim(\mathcal{H}_{R})$. First suppose that $Y$ is
invertible when acting on $\mathcal{H}_{S}$. The operator $X$ is generally not
invertible, and with respect to the decomposition of $\mathcal{H}_{S}$ as
$\operatorname{supp}(X)\oplus\ker(X)$, we can write $X$ and each
eigenprojection $|\phi^{y}\rangle\langle\phi^{y}|$\ of $Y$ respectively as%
\begin{equation}%
\begin{bmatrix}
X & 0\\
0 & 0
\end{bmatrix}
,\qquad%
\begin{bmatrix}
\phi_{11}^{y} & \phi_{12}^{y}\\
\phi_{21}^{y} & \phi_{22}^{y}%
\end{bmatrix}
.
\end{equation}
Let $\tau$ acting on $\mathcal{H}_{S}$ be such that $\tau>0$ and
$\operatorname{Tr}\{\tau\}=1$. Suppose that its spectral decomposition is
given by $\sum_{t=1}^{\left\vert S\right\vert }\nu_{t}|\psi^{t}\rangle
\langle\psi^{t}|$, with each $\nu_{t}\in(0,1)$ and $|\psi^{t}\rangle$ a unit
vector such that $\sum_{t}\nu_{t}=1$. We can then write each eigenprojection
$|\psi^{t}\rangle\langle\psi^{t}|$ with respect to the decomposition of
$\mathcal{H}_{S}$ as $\operatorname{supp}(X)\oplus\ker(X)$ as%
\begin{equation}%
\begin{bmatrix}
\psi_{11}^{t} & \psi_{12}^{t}\\
\psi_{21}^{t} & \psi_{22}^{t}%
\end{bmatrix}
.
\end{equation}
Applying definitions and \eqref{eq:f-divergence-spec-form}, we then find that%
\begin{align}
\widetilde{Q}_{f}(X\Vert Y;\tau)  &  =\sum_{y,t}f(\mu_{y}\nu_{t}%
^{-1})\operatorname{Tr}\left\{
\begin{bmatrix}
\sqrt{X} & 0\\
0 & 0
\end{bmatrix}%
\begin{bmatrix}
\phi_{11}^{y} & \phi_{12}^{y}\\
\phi_{21}^{y} & \phi_{22}^{y}%
\end{bmatrix}%
\begin{bmatrix}
\sqrt{X} & 0\\
0 & 0
\end{bmatrix}%
\begin{bmatrix}
\psi_{11}^{t} & \psi_{12}^{t}\\
\psi_{21}^{t} & \psi_{22}^{t}%
\end{bmatrix}
\right\} \\
&  =\sum_{y,t}f(\mu_{y}\nu_{t}^{-1})\operatorname{Tr}\{\sqrt{X}\phi_{11}%
^{y}\sqrt{X}\psi_{11}^{t}\}. \label{eq:smaller-Hilbert-f-div}%
\end{align}

Now consider $\widetilde{Q}_{f}$\ for $UXU^{\dag}$ and $UYU^{\dag}$. Without
loss of generality, we can consider the isometry $U$ to be the trivial
embedding of $\mathcal{H}_{S}$ into the larger Hilbert space $\mathcal{H}_{R}%
$, with it decomposed as $\mathcal{H}_{R}=\mathcal{H}_{S}\oplus\mathcal{H}%
_{S}^{\perp}$, so that with respect to the decomposition of $\mathcal{H}_{R}$
as $\mathcal{H}_{R}=\operatorname{supp}(X)\oplus\ker(X)\oplus\mathcal{H}%
_{S}^{\perp}$, we can write $X$ and each eigenprojection $|\phi^{y}%
\rangle\langle\phi^{y}|$\ of $Y$ in the larger Hilbert space $\mathcal{H}_{R}%
$\ as%
\begin{equation}%
\begin{bmatrix}
X & 0 & 0\\
0 & 0 & 0\\
0 & 0 & 0
\end{bmatrix}
,\qquad%
\begin{bmatrix}
\phi_{11}^{y} & \phi_{12}^{y} & 0\\
\phi_{21}^{y} & \phi_{22}^{y} & 0\\
0 & 0 & 0
\end{bmatrix}
.
\end{equation}
We use the notation $X_R$ to denote the operator $X$ embedded into $\mathcal{H}_R$.
Let $\omega$ acting on $\mathcal{H}_{R}$ be such that $\omega>0$ and
$\operatorname{Tr}\{\omega\}=1$. Suppose that its spectral decomposition is
given by $\sum_{s=1}^{\left\vert R\right\vert }\lambda_{s}|\varphi^{s}%
\rangle\langle\varphi^{s}|$, with each $\lambda_{s}\in(0,1)$ and $|\varphi
^{s}\rangle$ a unit vector such that $\sum_{s}\lambda_{s}=1$. We can then
write each eigenprojection $|\varphi^{s}\rangle\langle\varphi^{s}|$ as%
\begin{equation}%
\begin{bmatrix}
\varphi_{11}^{s} & \varphi_{12}^{s} & \varphi_{13}^{s}\\
\varphi_{21}^{s} & \varphi_{22}^{s} & \varphi_{23}^{s}\\
\varphi_{31}^{s} & \varphi_{32}^{s} & \varphi_{33}^{s}%
\end{bmatrix}
.
\end{equation}
Since $Y$ is no longer invertible after the embedding, we need to instead
consider the operator $Y_R+\varepsilon\Pi_{Y}^{\perp}$ for some $\varepsilon
\in(0,1)$, where we use the notation $Y_R$ to denote the operator $Y$ embedded into $\mathcal{H}_R$. Then the eigenprojections of $Y_R+\varepsilon\Pi_{Y}^{\perp}$ are now
represented in this larger space as%
\begin{equation}%
\begin{bmatrix}
\phi_{11}^{y} & \phi_{12}^{y} & 0\\
\phi_{21}^{y} & \phi_{22}^{y} & 0\\
0 & 0 & 0
\end{bmatrix}
,\qquad%
\begin{bmatrix}
0 & 0 & 0\\
0 & 0 & 0\\
0 & 0 & I
\end{bmatrix}
.
\end{equation}
Applying definitions, we then find that, in the larger Hilbert space
$\mathcal{H}_{R}$,%
\begin{align}
&  \widetilde{Q}_{f}(X_R\Vert Y_R+\varepsilon\Pi_{Y}^{\perp};\omega)\nonumber\\
&  =\sum_{y,s}f(\mu_{y}\lambda_{s}^{-1})\operatorname{Tr}\left\{
\begin{bmatrix}
\sqrt{X} & 0 & 0\\
0 & 0 & 0\\
0 & 0 & 0
\end{bmatrix}%
\begin{bmatrix}
\phi_{11}^{y} & \phi_{12}^{y} & 0\\
\phi_{21}^{y} & \phi_{22}^{y} & 0\\
0 & 0 & 0
\end{bmatrix}%
\begin{bmatrix}
\sqrt{X} & 0 & 0\\
0 & 0 & 0\\
0 & 0 & 0
\end{bmatrix}%
\begin{bmatrix}
\varphi_{11}^{s} & \varphi_{12}^{s} & \varphi_{13}^{s}\\
\varphi_{21}^{s} & \varphi_{22}^{s} & \varphi_{23}^{s}\\
\varphi_{31}^{s} & \varphi_{32}^{s} & \varphi_{33}^{s}%
\end{bmatrix}
\right\} \nonumber\\
&  \qquad+\sum_{t^{\prime}}f(\varepsilon\lambda_{s}^{-1})\operatorname{Tr}%
\left\{
\begin{bmatrix}
\sqrt{X} & 0 & 0\\
0 & 0 & 0\\
0 & 0 & 0
\end{bmatrix}%
\begin{bmatrix}
0 & 0 & 0\\
0 & 0 & 0\\
0 & 0 & I
\end{bmatrix}%
\begin{bmatrix}
\sqrt{X} & 0 & 0\\
0 & 0 & 0\\
0 & 0 & 0
\end{bmatrix}%
\begin{bmatrix}
\varphi_{11}^{s} & \varphi_{12}^{s} & \varphi_{13}^{s}\\
\varphi_{21}^{s} & \varphi_{22}^{s} & \varphi_{23}^{s}\\
\varphi_{31}^{s} & \varphi_{32}^{s} & \varphi_{33}^{s}%
\end{bmatrix}
\right\} \\
&  =\sum_{y,s}f(\mu_{y}\lambda_{s}^{-1})\operatorname{Tr}\{\sqrt{X}\phi
_{11}^{y}\sqrt{X}\varphi_{11}^{s}\}. \label{eq:larger-Hilbert-f-div}%
\end{align}
We now compare the expressions in \eqref{eq:smaller-Hilbert-f-div} and
\eqref{eq:larger-Hilbert-f-div}. For a given $\tau>0$ with spectral
decomposition $\sum_{t=1}^{\left\vert S\right\vert }\nu_{t}|\psi^{t}%
\rangle\langle\psi^{t}|$ and $\delta\in(0,1)$, we can choose $\omega
(\delta)>0$ as%
\begin{equation}
\omega(\delta)=\left(  1-\delta\right)  \sum_{t}\nu_{t}%
\begin{bmatrix}
\psi_{11}^{t} & \psi_{12}^{t} & 0\\
\psi_{21}^{t} & \psi_{22}^{t} & 0\\
0 & 0 & 0
\end{bmatrix}
+\delta%
\begin{bmatrix}
0 & 0 & 0\\
0 & 0 & 0\\
0 & 0 & I/\dim(\mathcal{H}_{S}^{\perp})
\end{bmatrix}
,
\end{equation}
and then, by the above reasoning, we have that%
\begin{equation}
\widetilde{Q}_{f}(X_R\Vert Y_R+\varepsilon\Pi_{Y}^{\perp};\omega(\delta
))=\sum_{y,t}f(\mu_{y}\left[  (1-\delta)\nu_{t}\right]  ^{-1}%
)\operatorname{Tr}\{\sqrt{X}\phi_{11}^{y}\sqrt{X}\psi_{11}^{t}\}.
\end{equation}
Taking the limit $\delta\searrow0$ and applying the continuity of $f$ then
gives%
\begin{equation}
\lim_{\delta\searrow0}\widetilde{Q}_{f}(X_R\Vert Y_R+\varepsilon\Pi_{Y}^{\perp
};\omega(\delta))=\widetilde{Q}_{f}(X\Vert Y;\tau).
\end{equation}
So it is clear that the following inequality holds for all $\tau$:%
\begin{equation}
\widetilde{Q}_{f}(X\Vert Y;\tau)\leq\sup_{\omega>0,\ \operatorname{Tr}%
\{\omega\}=1}\widetilde{Q}_{f}(X_R\Vert Y_R+\varepsilon\Pi_{Y}^{\perp};\omega).
\end{equation}
We can thus conclude that%
\begin{equation}
\sup_{\tau>0,\ \operatorname{Tr}\{\tau\}=1}\widetilde{Q}_{f}(X\Vert
Y;\tau)\leq\sup_{\omega>0,\ \operatorname{Tr}\{\omega\}=1}\widetilde{Q}%
_{f}(X_R\Vert Y_R+\varepsilon\Pi_{Y}^{\perp};\omega),
\end{equation}
which is the same as the inequality%
\begin{equation}
\widetilde{Q}_{f}(X\Vert Y)\leq\widetilde{Q}_{f}(UXU^{\dag}\Vert UYU^{\dag}).
\end{equation}
This establishes the inequality $\widetilde{Q}_{f}(X\Vert Y)\leq\widetilde
{Q}_{f}(UXU^{\dag}\Vert UYU^{\dag})$ in the case in which $Y$ is invertible
when acting on $\mathcal{H}_{S}$.

Given that the function $x^{-1}$ is operator anti-monotone on $(0,\infty)$ and
has range $(0,\infty)$, it follows that $f(x^{-1})=g(x)$ is operator monotone
on $(0,\infty)$ and thus operator concave \cite{H13}. Defining the embedding
isometry $V\equiv V_{S\rightarrow R}\otimes V_{\hat{S}\rightarrow\hat{R}%
}\equiv\sum_{i}|i\rangle_{R}\langle i|_{S}\otimes\sum_{j}|j\rangle_{\hat{R}%
}\langle j|_{\hat{S}}$, we then have by a direct application of the operator
Jensen inequality \cite{HP03} and the fact that $g$ is operator concave\ that%
\begin{align}
\widetilde{Q}_{f}(X\Vert Y+\varepsilon\Pi_{Y}^{\perp};\omega)  &
=\langle\varphi^{X}|_{R\hat{R}}f(\omega_{R}^{-1}\otimes\left[  Y_{\hat{R}%
}+\varepsilon\Pi_{Y}^{\perp}\right]  ^{T})|\varphi^{X}\rangle_{R\hat{R}}\\
&  =\langle\varphi^{X}|_{S\hat{S}}V^{\dag}f\!\left(  \left[  \omega_{R}%
\otimes\left[  Y_{\hat{R}}^{-1}+\varepsilon^{-1}\Pi_{Y}^{\perp}\right]
^{T}\right]  ^{-1}\right)  V|\varphi^{X}\rangle_{S\hat{S}}\\
&  =\langle\varphi^{X}|_{S\hat{S}}V^{\dag}g\!\left(  \omega_{R}\otimes\left[
Y_{\hat{R}}^{-1}+\varepsilon^{-1}\Pi_{Y}^{\perp}\right]  ^{T}\right)
V|\varphi^{X}\rangle_{S\hat{S}}\\
&  \leq\langle\varphi^{X}|_{S\hat{S}} g\!\left(  V^{\dag}\left[  \omega
_{R}\otimes\left[  Y_{\hat{R}}^{-1}+\varepsilon^{-1}\Pi_{Y}^{\perp}\right]
^{T}\right]  V\right)  |\varphi^{X}\rangle_{S\hat{S}}\\
&  =\langle\varphi^{X}|_{S\hat{S}} g\!\left(  \omega_{S}^{\prime}%
\otimes\left[  Y_{\hat{S}}^{-1}\right]  ^{T}\right)  |\varphi^{X}%
\rangle_{S\hat{S}}\\
&  =\langle\varphi^{X}|_{S\hat{S}}f(\left(  \omega_{S}^{\prime}\right)
^{-1}\otimes Y_{\hat{S}}^{T})|\varphi^{X}\rangle_{S\hat{S}}\\
&  =\widetilde{Q}_{f}(X\Vert Y;\omega_{S}^{\prime})\\
&  \leq\widetilde{Q}_{f}(X\Vert Y),
\end{align}
where $\omega_{S}^{\prime}\equiv\left(  V_{S\rightarrow R}\right)  ^{\dag
}\omega_{R}V_{S\rightarrow R}$ is an operator acting on $\mathcal{H}_{S}$ such
that $\omega_{S}^{\prime}>0$ and $\operatorname{Tr}\{\omega_{S}^{\prime}%
\}\leq1$. In the above, the notation $Y_{\hat{R}}^{-1}$ indicates the inverse
on the support of $Y_{\hat{R}}^{-1}$, and we have employed the facts that
$\left[  Y_{\hat{R}}+\varepsilon\Pi_{Y}^{\perp}\right]  ^{-1}=Y_{\hat{R}}%
^{-1}+\varepsilon^{-1}\Pi_{Y}^{\perp}$ and $\left(  V_{\hat{S}\rightarrow
\hat{R}}\right)  ^{\dag}\left(  Y_{\hat{R}}^{-1}+\varepsilon^{-1}\Pi
_{Y}^{\perp}\right)  V_{\hat{S}\rightarrow\hat{R}}=Y_{\hat{R}}^{-1}$. Since
the inequality holds for all $\omega>0$ such that $\operatorname{Tr}%
\{\omega\}=1$, we conclude that%
\begin{equation}
\widetilde{Q}_{f}(UXU^{\dag}\Vert UYU^{\dag})\leq\widetilde{Q}_{f}(X\Vert Y).
\end{equation}
We have now established the claim for invertible $Y$.

If $Y$ is not invertible when acting on $\mathcal{H}_{S}$, then the definition
in \eqref{eq:q-f-Y-not-invertible} applies, which in fact forces $Y$ to become
invertible when acting on $\mathcal{H}_{S}$. So then, in this case, we can
conclude that%
\begin{equation}
\widetilde{Q}_{f}(X\Vert Y+\varepsilon\Pi_{Y}^{\perp})=\widetilde{Q}%
_{f}(UXU^{\dag}\Vert U\left[  Y+\varepsilon\Pi_{Y}^{\perp}\right]  U^{\dag
}+\varepsilon\Pi_{\mathcal{H}_{S}^{\perp}}).
\end{equation}
So the quantities are the same for all $\varepsilon\in(0,1)$, and then the
equality follows by taking a supremum over $\varepsilon>0$.
\end{proof}

\begin{remark}
The above proof establishes the inequality $\widetilde{Q}_{f}(X\Vert
Y)\leq\widetilde{Q}_{f}(UXU^{\dag}\Vert UYU^{\dag})$ for any continuous
function $f$ with domain $(0,\infty)$ and range $\mathbb{R}$, but for the
opposite inequality $\widetilde{Q}_{f}(UXU^{\dag}\Vert UYU^{\dag}%
)\leq\widetilde{Q}_{f}(X\Vert Y)$, the proof given requires $f$ to be operator
anti-monotone with domain $(0,\infty)$ and range~$\mathbb{R}$.
\end{remark}

We now complete the second step toward quantum data processing, as mentioned above:

\begin{proposition}
[Monotonicity under partial trace]\label{prop:mono-PT}Given positive
semi-definite operators $X_{AB}$ and $Y_{AB}$ acting on the tensor-product
Hilbert space $\mathcal{H}_{A}\otimes\mathcal{H}_{B}$, the optimized quantum
$f$-divergence does not increase under the action of a partial trace, in the
sense that%
\begin{equation}
\widetilde{Q}_{f}(X_{AB}\Vert Y_{AB})\geq\widetilde{Q}_{f}(X_{A}\Vert Y_{A}),
\label{eq:DP-partial-trace}%
\end{equation}
where $X_{A}=\operatorname{Tr}_{B}\{X_{AB}\}$ and $Y_{A}=\operatorname{Tr}%
_{B}\{Y_{AB}\}$.
\end{proposition}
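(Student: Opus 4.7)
My plan is to parallel the proof of isometric invariance (Proposition~\ref{prop:iso-inv}), combining the operator Jensen inequality with an isometry relating $|\varphi^{X_A}\rangle$ to $|\varphi^{X_{AB}}\rangle$ together with a suitable extension of $\tau_A$ to $\tau_{AB}$. First I reduce to the case that $Y_{AB}$ (and hence $Y_A=\operatorname{Tr}_B Y_{AB}$) is positive definite, by replacing $Y_{AB}$ with $Y_{AB}+\varepsilon\Pi_{Y_{AB}}^{\perp}$ and taking $\varepsilon\searrow 0$ at the end; I similarly wish $X_A$ to have full support on $\mathcal{H}_A$, and this is where the ``novel aspects for handling non-invertible operators'' promised in the introduction will enter, since the isometry I construct next is only canonically defined on $\operatorname{supp}(X_A)$.

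Since $X_A=\operatorname{Tr}_B X_{AB}$, both $|\varphi^{X_A}\rangle_{A\hat{A}}$ and $|\varphi^{X_{AB}}\rangle_{AB\hat{A}\hat{B}}$ are purifications of $X_A$ on system $A$. Uhlmann's theorem therefore supplies an isometry $V_1:\hat{\mathcal{H}}_A\to\mathcal{H}_B\otimes\hat{\mathcal{H}}_A\otimes\hat{\mathcal{H}}_B$ with $(I_A\otimes V_1)|\varphi^{X_A}\rangle=|\varphi^{X_{AB}}\rangle$. Set $V:=I_A\otimes V_1$, which is an isometry from $\mathcal{H}_A\otimes\hat{\mathcal{H}}_A$ into $\mathcal{H}_{AB}\otimes\hat{\mathcal{H}}_{AB}$.

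Given $\tau_A>0$ with $\operatorname{Tr}\{\tau_A\}\leq 1$ that approximately achieves the supremum in $\widetilde{Q}_{f}(X_A\Vert Y_A)$, the plan is to exhibit an extension $\tau_{AB}>0$ with $\operatorname{Tr}\{\tau_{AB}\}\leq 1$ for which the operator inequality
\[
V^{\dag}\bigl[\tau_{AB}^{-1}\otimes Y_{AB}^{T}\bigr]V \;\leq\; \tau_A^{-1}\otimes Y_A^{T} \qquad(\star)
\]
holds on $\mathcal{H}_A\otimes\hat{\mathcal{H}}_A$. A product ansatz $\tau_{AB}=\tau_A\otimes\sigma_B$ collapses $(\star)$ to the simpler condition $V_1^{\dag}(\sigma_B^{-1}\otimes Y_{AB}^{T})V_1\leq Y_A^{T}$ on $\hat{\mathcal{H}}_A$, but this is not obviously attainable by a subnormalized state $\sigma_B$ in general, so a non-product extension (most naturally one built from a Petz recovery map associated with $\operatorname{Tr}_B$ and $Y_{AB}$) is likely needed. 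Once $(\star)$ is in hand, the operator Jensen inequality applied to the operator convex function $f$ yields
\[
V^{\dag}f\bigl(\tau_{AB}^{-1}\otimes Y_{AB}^{T}\bigr)V \;\geq\; f\bigl(V^{\dag}[\tau_{AB}^{-1}\otimes Y_{AB}^{T}]V\bigr),
\]
and then operator anti-monotonicity of $f$ combined with $(\star)$ upgrades this to $V^{\dag}f(\tau_{AB}^{-1}\otimes Y_{AB}^{T})V\geq f(\tau_A^{-1}\otimes Y_A^{T})$. Sandwiching by $|\varphi^{X_A}\rangle$ and using $V|\varphi^{X_A}\rangle=|\varphi^{X_{AB}}\rangle$ delivers $\widetilde{Q}_{f}(X_{AB}\Vert Y_{AB};\tau_{AB})\geq\widetilde{Q}_{f}(X_A\Vert Y_A;\tau_A)$, and taking the supremum over $\tau_A$ followed by $\varepsilon\searrow 0$ completes the argument.

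The hard part will be constructing $\tau_{AB}$ so that $(\star)$ actually holds: the Uhlmann isometry $V_1$ depends on both $X_A$ and $X_{AB}$, so the chosen extension must respect that dependence, and the product form $\tau_A\otimes\sigma_B$ appears to be insufficient in general. I expect the correct extension to be a non-product, Petz-recovery-type object whose inverse is potentially rank-deficient and must be regularized carefully---which is precisely where the non-invertibility subtleties previewed in the introduction concretely appear.
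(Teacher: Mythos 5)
Your high-level architecture matches the paper's: an isometry $V$ with $V|\varphi^{X_A}\rangle=|\varphi^{X_{AB}}\rangle$, an extension $\tau_{AB}$ of $\tau_A$ making $V^{\dag}[\tau_{AB}^{-1}\otimes Y_{\hat A\hat B}^{T}]V\leq\tau_A^{-1}\otimes Y_{\hat A}^{T}$, then the operator Jensen inequality plus anti-monotonicity of $f$. But the entire substance of the proof lies in the construction you defer as ``the hard part,'' and the two concrete guesses you do make point in the wrong direction. First, the correct extension is built from the Petz recovery channel associated with $X_{AB}$, not with $Y_{AB}$: with $\mathcal{P}(Z_A)=X_{AB}^{1/2}([X_A^{-1/2}Z_A X_A^{-1/2}]\otimes I_B)X_{AB}^{1/2}$ one sets $\tau_{AB}=\mathcal{P}(\tau_A)$, and then $(\star)$ holds with \emph{equality} (when $X_{AB}>0$), not merely as an inequality. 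Second, your insistence on an Uhlmann isometry of the form $I_A\otimes V_1$ (acting trivially on $A$) is an obstruction rather than a convenience: the isometry that makes the computation close is the isometric extension of $\mathcal{P}$ itself, namely $V=X_{AB}^{1/2}[X_A^{-1/2}\otimes I_{\hat A}]\,|\Gamma\rangle_{B\hat B}$, which acts nontrivially on $A$; it is precisely the cancellation $X_A^{-1/2}X_{AB}^{1/2}\tau_{AB}^{-1}X_{AB}^{1/2}X_A^{-1/2}=\tau_A^{-1}$ that collapses $V^{\dag}(\tau_{AB}^{-1}\otimes Y_{\hat A\hat B}^{T})V$ to $\tau_A^{-1}\otimes Y_{\hat A}^{T}$. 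With a $V$ of the form $I_A\otimes V_1$, the operator $\tau_{AB}^{-1}$ acts on system $A$, which $V_1$ does not touch, so the contraction over $B\hat A\hat B$ leaves a complicated $A$-dependence and there is no evident choice of $\tau_{AB}$ for which $(\star)$ holds.

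The non-invertibility issues are also heavier than your sketch suggests. When $X_{AB}$ is not invertible but $X_A$ is, $\mathcal{P}(\tau_A)$ need not be invertible; the paper mixes in $\delta\xi_{AB}$ for an invertible $\xi_{AB}$, and the price is that $(\star)$ only holds up to a factor $(1-\delta)^{-1}$, established via the operator-norm bound $\Vert\tau_{AB}^{-1/2}X_{AB}^{1/2}X_A^{-1/2}\tau_A X_A^{-1/2}X_{AB}^{1/2}\tau_{AB}^{-1/2}\Vert_{\infty}\leq(1-\delta)^{-1}$, followed by $\delta\searrow0$ using continuity of $f$. When $X_A$ itself is not invertible, $\mathcal{P}$ must be completed to a channel on $\ker(X_A)$, the isometric extension acquires an extra environment system, and a separate argument (the paper's appendix) is needed to show the extra terms vanish or are controlled. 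None of this follows automatically from the plan as stated, so as it stands the proposal identifies the right scaffolding but omits the proof.
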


\begin{proof}
Throughout the proof, we take $Y_{AB}$ to be invertible on $\mathcal{H}%
_{A}\otimes\mathcal{H}_{B}$. We can do so because the supremum over
$\varepsilon>0$ can be placed on the very outside, as in
Definition~\ref{def:opt-f-div}, and then we can optimize over $\varepsilon>0$
at the very end once the monotonicity inequality has been established. There
are three cases to consider:

\begin{enumerate}
\item when $X_{AB}>0$,

\item when $X_{A}>0$, but $X_{AB}$ is not invertible, and

\item when $X_{A}$ is not invertible.
\end{enumerate}

\noindent Here I show a proof for the first two cases, and the last case is
shown in detail in the appendix for the interested reader.

\vspace{.1in}

\textbf{Case }$X_{AB}>0$\textbf{:} We establish the claim when $X_{AB}$ is
invertible, and so $X_{A}$ is as well. This is the simplest case to consider
and thus has the most transparent proof (it is fruitful to understand this
case well before considering the other cases). The
quantities of interest are as follows:%
\begin{align}
\widetilde{Q}_{f}(X_{AB}\Vert Y_{AB};\tau_{AB}) &  =\langle\varphi^{X_{AB}%
}|_{AB\hat{A}\hat{B}}f(\tau_{AB}^{-1}\otimes Y_{\hat{A}\hat{B}}^{T}%
)|\varphi^{X_{AB}}\rangle_{AB\hat{A}\hat{B}},\\
\widetilde{Q}_{f}(X_{A}\Vert Y_{A};\omega_{A}) &  =\langle\varphi^{X_{A}%
}|_{A\hat{A}}f(\omega_{A}^{-1}\otimes Y_{\hat{A}}^{T})|\varphi^{X_{A}}%
\rangle_{A\hat{A}},
\end{align}
where $\tau_{AB}$ and $\omega_{A}$ are invertible density operators and, by
definition,%
\begin{equation}
|\varphi^{X_{AB}}\rangle_{AB\hat{A}\hat{B}}=\left(  X_{AB}^{1/2}\otimes
I_{\hat{A}\hat{B}}\right)  |\Gamma\rangle_{A\hat{A}}\otimes|\Gamma
\rangle_{B\hat{B}}.
\end{equation}
The following map, acting on an operator $Z_{A}$, is a quantum channel known
as the Petz recovery channel \cite{Petz1986,Petz1988} (see also
\cite{BK02,HJPW04,LS13}):%
\begin{equation}
Z_{A}\rightarrow X_{AB}^{1/2}\left(  \left[  X_{A}^{-1/2}Z_{A}X_{A}%
^{-1/2}\right]  \otimes I_{B}\right)  X_{AB}^{1/2}.\label{eq:petz-recovery}%
\end{equation}
It is completely positive because it consists of the serial concatenation of
three completely positive maps:\ sandwiching by $X_{A}^{-1/2}$, tensoring in
the identity $I_{B}$, and sandwiching by $X_{AB}^{1/2}$. It is trace
preserving because%
\begin{align}
\operatorname{Tr}\left\{  X_{AB}^{1/2}\left(  \left[  X_{A}^{-1/2}Z_{A}%
X_{A}^{-1/2}\right]  \otimes I_{B}\right)  X_{AB}^{1/2}\right\}   &
=\operatorname{Tr}\left\{  X_{AB}\left(  \left[  X_{A}^{-1/2}Z_{A}X_{A}%
^{-1/2}\right]  \otimes I_{B}\right)  \right\}  \\
&  =\operatorname{Tr}\left\{  X_{A}\left[  X_{A}^{-1/2}Z_{A}X_{A}%
^{-1/2}\right]  \right\}  \\
&  =\operatorname{Tr}\left\{  X_{A}^{-1/2}X_{A}X_{A}^{-1/2}Z_{A}\right\}  \\
&  =\operatorname{Tr}\{Z_{A}\}.
\end{align}
The Petz recovery channel has the property that it perfectly recovers $X_{AB}$
if $X_{A}$ is input because%
\begin{equation}
X_{A}\rightarrow X_{AB}^{1/2}\left(  \left[  X_{A}^{-1/2}X_{A}X_{A}%
^{-1/2}\right]  \otimes I_{B}\right)  X_{AB}^{1/2}=X_{AB}%
.\label{eq:perfect-recovery}%
\end{equation}
Every completely positive and trace preserving map $\mathcal{N}$ has a Kraus
decomposition, which is a set $\{K_{i}\}_{i}$ of operators such that%
\begin{equation}
\mathcal{N}(\cdot)=\sum_{i}K_{i}(\cdot)K_{i}^{\dag},\qquad\sum_{i}K_{i}^{\dag
}K_{i}=I.
\end{equation}
A standard construction for an isometric extension of a channel is then to
pick an orthonormal basis $\{|i\rangle_{E}\}_{i}$ for an auxiliary Hilbert
space $\mathcal{H}_{E}$\ and define%
\begin{equation}
V=\sum_{i}K_{i}\otimes|i\rangle_{E}.\label{eq:iso-from-kraus}%
\end{equation}
One can then readily check that $\mathcal{N}(\cdot)=\operatorname{Tr}%
_{E}\{V(\cdot)V^{\dag}\}$ and $V^{\dag}V=I$. (See, e.g., \cite{W17} for a
review of these standard notions.) For the Petz recovery channel, we can
figure out a Kraus decomposition by expanding the identity operator
$I_{B}=\sum_{j=1}^{\left\vert B\right\vert }|j\rangle\langle j|_{B}$, with
respect to some orthonormal basis $\{|j\rangle_{B}\}_{j}$, so that%
\begin{align}
X_{AB}^{1/2}\left(  \left[  X_{A}^{-1/2}Z_{A}X_{A}^{-1/2}\right]  \otimes
I_{B}\right)  X_{AB}^{1/2} &  =\sum_{j=1}^{\left\vert B\right\vert }%
X_{AB}^{1/2}\left(  \left[  X_{A}^{-1/2}Z_{A}X_{A}^{-1/2}\right]
\otimes|j\rangle\langle j|_{B}\right)  X_{AB}^{1/2}\\
&  =\sum_{j=1}^{\left\vert B\right\vert }X_{AB}^{1/2}\left[  X_{A}%
^{-1/2}\otimes|j\rangle_{B}\right]  Z_{A}\left[  X_{A}^{-1/2}\otimes\langle
j|_{B}\right]  X_{AB}^{1/2}.
\end{align}
Thus, Kraus operators for the Petz recovery channel are given by%
\begin{equation}
\left\{  X_{AB}^{1/2}\left[  X_{A}^{-1/2}\otimes|j\rangle_{B}\right]
\right\}  _{j=1}^{\left\vert B\right\vert }.
\end{equation}
According to the standard recipe in \eqref{eq:iso-from-kraus}, we can
construct an isometric extension of the Petz recovery channel\ as%
\begin{align}
\sum_{j=1}^{\left\vert B\right\vert }X_{AB}^{1/2}\left[  X_{A}^{-1/2}%
\otimes|j\rangle_{B}\right]  |j\rangle_{\hat{B}} &  =X_{AB}^{1/2}X_{A}%
^{-1/2}\sum_{j=1}^{\left\vert B\right\vert }|j\rangle_{B}|j\rangle_{\hat{B}}\\
&  =X_{AB}^{1/2}X_{A}^{-1/2}|\Gamma\rangle_{B\hat{B}}.
\end{align}
We can then extend this isometry to act as an isometry on a larger space by
tensoring it with the identity operator $I_{\hat{A}}$, and so we define%
\begin{equation}
V_{A\hat{A}\rightarrow A\hat{A}B\hat{B}}\equiv X_{AB}^{1/2}\left[
X_{A}^{-1/2}\otimes I_{\hat{A}}\right]  |\Gamma\rangle_{B\hat{B}}.
\end{equation}
We can also see that $V_{A\hat{A}\rightarrow A\hat{A}B\hat{B}}$ acting on
$|\varphi^{X_{A}}\rangle_{A\hat{A}}$ generates $|\varphi^{X_{AB}}%
\rangle_{AB\hat{A}\hat{B}}$:%
\begin{equation}
|\varphi^{X_{AB}}\rangle_{AB\hat{A}\hat{B}}=V_{A\hat{A}\rightarrow A\hat
{A}B\hat{B}}|\varphi^{X_{A}}\rangle_{A\hat{A}}.
\end{equation}
This can be interpreted as a generalization of \eqref{eq:perfect-recovery} in
the language of quantum information:\ an isometric extension of the Petz
recovery channel perfectly recovers a purification $|\varphi^{X_{AB}}%
\rangle_{AB\hat{A}\hat{B}}$ of $X_{AB}$ from a purification $|\varphi^{X_{A}%
}\rangle_{A\hat{A}}$ of $X_{A}$. Since the Petz recovery channel is indeed a
channel, we can pick $\tau_{AB}$ as the output state of the Petz recovery
channel acting on an invertible state $\omega_{A}$:%
\begin{equation}
\tau_{AB}=X_{AB}^{1/2}\left(  \left[  X_{A}^{-1/2}\omega_{A}X_{A}%
^{-1/2}\right]  \otimes I_{B}\right)  X_{AB}^{1/2}.\label{eq:tau-state-choice}%
\end{equation}
Observe that $\tau_{AB}$ is invertible. Then consider that%
\begin{align}
&  V^{\dag}\left(  \tau_{AB}^{-1}\otimes Y_{\hat{A}\hat{B}}^{T}\right)
V\nonumber\\
&  =\left(  \langle\Gamma|_{B\hat{B}}\left[  X_{A}^{-1/2}\otimes I_{\hat{A}%
}\right]  X_{AB}^{1/2}\right)  \left(  \tau_{AB}^{-1}\otimes Y_{\hat{A}\hat
{B}}^{T}\right)  \left(  X_{AB}^{1/2}\left[  X_{A}^{-1/2}\otimes I_{\hat{A}%
}\right]  |\Gamma\rangle_{B\hat{B}}\right)  \\
&  =\langle\Gamma|_{B\hat{B}}\left(  X_{A}^{-1/2}X_{AB}^{1/2}\tau_{AB}%
^{-1}X_{AB}^{1/2}X_{A}^{-1/2}\otimes Y_{\hat{A}\hat{B}}^{T}\right)
|\Gamma\rangle_{B\hat{B}}\\
&  =\langle\Gamma|_{B\hat{B}}\left(  \omega_{A}^{-1}\otimes Y_{\hat{A}\hat{B}%
}^{T}\right)  |\Gamma\rangle_{B\hat{B}}\\
&  =\omega_{A}^{-1}\otimes\langle\Gamma|_{B\hat{B}}Y_{\hat{A}\hat{B}}%
^{T}|\Gamma\rangle_{B\hat{B}}\\
&  =\omega_{A}^{-1}\otimes Y_{\hat{A}}^{T}.
\end{align}
For the fourth equality, we used the fact that $\tau_{AB}^{-1} = X_{AB}^{-1/2}(  [  X_{A}^{1/2}\omega_{A}^{-1} X_{A}%
^{1/2}]  \otimes I_{B})  X_{AB}^{-1/2}$ for the choice of $\tau_{AB}$ in \eqref{eq:tau-state-choice}.
With this setup, we can now readily establish the desired inequality by
employing the operator Jensen inequality \cite{HP03} and operator convexity of
the function $f$:%
\begin{align}
\widetilde{Q}_{f}(X_{AB}\Vert Y_{AB};\tau_{AB}) &  =\langle\varphi^{X_{AB}%
}|_{AB\hat{A}\hat{B}}f(\tau_{AB}^{-1}\otimes Y_{\hat{A}\hat{B}}^{T}%
)|\varphi^{X_{AB}}\rangle_{AB\hat{A}\hat{B}}\label{eq:op-convex-1}\\
&  =\langle\varphi^{X_{A}}|_{A\hat{A}}V^{\dag}f(\tau_{AB}^{-1}\otimes
Y_{\hat{A}\hat{B}}^{T})V|\varphi^{X_{A}}\rangle_{A\hat{A}}\\
&  \geq\langle\varphi^{X_{A}}|_{A\hat{A}}f(V^{\dag}[\tau_{AB}^{-1}\otimes
Y_{\hat{A}\hat{B}}^{T}]V)|\varphi^{X_{A}}\rangle_{A\hat{A}}\\
&  =\langle\varphi^{X_{A}}|_{A\hat{A}}f(\omega_{A}^{-1}\otimes Y_{\hat{A}}%
^{T})|\varphi^{X_{A}}\rangle_{A\hat{A}}\\
&  =\widetilde{Q}_{f}(X_{A}\Vert Y_{A};\omega_{A}).\label{eq:op-convex-last}%
\end{align}
Taking a supremum over $\tau_{AB}$ such that $\tau_{AB}>0$ and
$\operatorname{Tr}\{\tau_{AB}\}=1$, we conclude that the following inequality
holds for all invertible states $\omega_{A}$:%
\begin{equation}
\widetilde{Q}_{f}(X_{AB}\Vert Y_{AB})\geq\widetilde{Q}_{f}(X_{A}\Vert
Y_{A};\omega_{A}).
\end{equation}
After taking a supremum over invertible states $\omega_{A}$, we find that the
inequality in \eqref{eq:DP-partial-trace} holds when $X_{AB}$ is invertible.

\vspace{.1in}

\textbf{Case }$X_{A}>0$\textbf{, but }$X_{AB}$\textbf{ not invertible:} Consider the following isometry:%
\begin{equation}
V_{A\hat{A}\rightarrow AB\hat{A}\hat{B}}\equiv X_{AB}^{1/2}\left[
X_{A}^{-1/2}\otimes I_{\hat{A}}\right]  |\Gamma\rangle_{B\hat{B}}.
\end{equation}
The operator $V_{A\hat{A}\rightarrow A\hat{A}B\hat{B}}$ is indeed an isometry
 because
\begin{align}
V^{\dag}V &  =\left(  \langle\Gamma|_{B\hat{B}}\left[  X_{A}^{-1/2}\otimes
I_{\hat{A}}\right]  X_{AB}^{1/2}\right)  \left(  X_{AB}^{1/2}\left[
X_{A}^{-1/2}\otimes I_{\hat{A}}\right]  |\Gamma\rangle_{B\hat{B}}\right)  \\
&  =\langle\Gamma|_{B\hat{B}}\left[  X_{A}^{-1/2}\otimes I_{\hat{A}}\right]
X_{AB}\left[  X_{A}^{-1/2}\otimes I_{\hat{A}}\right]  |\Gamma\rangle_{B\hat
{B}}\\
&  =\left[  X_{A}^{-1/2}\otimes I_{\hat{A}}\right]  \langle\Gamma|_{B\hat{B}%
}X_{AB}|\Gamma\rangle_{B\hat{B}}\left[  X_{A}^{-1/2}\otimes I_{\hat{A}%
}\right]  \\
&  =\left[  X_{A}^{-1/2}\otimes I_{\hat{A}}\right]  X_{A}\left[  X_{A}%
^{-1/2}\otimes I_{\hat{A}}\right]  \\
&  =X_{A}^{-1/2}X_{A}X_{A}^{-1/2}\otimes I_{\hat{A}}\\
&  =I_{A}\otimes I_{\hat{A}}.
\end{align}
Then, for $\delta\in(0,1)$ and $\omega_{A}$ an invertible density operator,
take $\tau_{AB}$ to be the following invertible density operator:%
\begin{equation}
\tau_{AB}=\left(  1-\delta\right)  X_{AB}^{1/2}\left(  \left[  X_{A}%
^{-1/2}\omega_{A}X_{A}^{-1/2}\right]  \otimes I_{B}\right)  X_{AB}%
^{1/2}+\delta\xi_{AB},
\end{equation}
where $\xi_{AB}$ is some invertible density operator. We then find that%
\begin{align}
& V^{\dag}(\tau_{AB}^{-1}\otimes Y_{\hat{A}\hat{B}}^{T})V\nonumber\\
& =\langle\Gamma|_{B\hat{B}}X_{A}^{-1/2}X_{AB}^{1/2}(\tau_{AB}^{-1}\otimes
Y_{\hat{A}\hat{B}}^{T})X_{AB}^{1/2}X_{A}^{-1/2}|\Gamma\rangle_{B\hat{B}%
}\label{eq:op-collapse-1-new}\\
& =\langle\Gamma|_{B\hat{B}}X_{A}^{-1/2}X_{AB}^{1/2}\tau_{AB}^{-1}X_{AB}%
^{1/2}X_{A}^{-1/2}\otimes Y_{\hat{A}\hat{B}}^{T}|\Gamma\rangle_{B\hat{B}}\\
& =\langle\Gamma|_{B\hat{B}}\omega_{A}^{-1/2}\omega_{A}^{1/2}X_{A}%
^{-1/2}X_{AB}^{1/2}\tau_{AB}^{-1}X_{AB}^{1/2}X_{A}^{-1/2}\omega_{A}%
^{1/2}\omega_{A}^{-1/2}\otimes Y_{\hat{A}\hat{B}}^{T}|\Gamma\rangle_{B\hat{B}%
}\\
& \leq\langle\Gamma|_{B\hat{B}}\omega_{A}^{-1/2}\left\Vert \omega_{A}%
^{1/2}X_{A}^{-1/2}X_{AB}^{1/2}\tau_{AB}^{-1}X_{AB}^{1/2}X_{A}^{-1/2}\omega
_{A}^{1/2}\right\Vert _{\infty}\omega_{A}^{-1/2}\otimes Y_{\hat{A}\hat{B}}%
^{T}|\Gamma\rangle_{B\hat{B}}\\
& =\left\Vert \tau_{AB}^{-1/2}X_{AB}^{1/2}X_{A}^{-1/2}\omega_{A}X_{A}%
^{-1/2}X_{AB}^{1/2}\tau_{AB}^{-1/2}\right\Vert _{\infty}\langle\Gamma
|_{B\hat{B}}\omega_{A}^{-1}\otimes Y_{\hat{A}\hat{B}}^{T}|\Gamma\rangle
_{B\hat{B}}\\
& \leq\frac{1}{1-\delta}\left[  \omega_{A}^{-1}\otimes Y_{\hat{A}}^{T}\right]
.\label{eq:op-collapse-last-new}%
\end{align}
The last equality follows because $\Vert Z^\dag Z \Vert_\infty =  \Vert Z Z^\dag \Vert_\infty$ for any operator $Z$ (here we set $Z = \tau_{AB}^{-1/2}X_{AB}^{1/2}X_{A}^{-1/2}\omega
_{A}^{1/2}$.
The last inequality follows because%
\begin{align}
&  \left\Vert \tau_{AB}^{-1/2}X_{AB}^{1/2}X_{A}^{-1/2}\omega_{A}X_{A}%
^{-1/2}X_{AB}^{1/2}\tau_{AB}^{-1/2}\right\Vert _{\infty}\nonumber\\
&  =\inf\left\{  \mu:\tau_{AB}^{-1/2}X_{AB}^{1/2}X_{A}^{-1/2}\omega_{A}%
X_{A}^{-1/2}X_{AB}^{1/2}\tau_{AB}^{-1/2}\leq\mu I_{AB}\right\}  \\
&  =\inf\left\{  \mu:X_{AB}^{1/2}X_{A}^{-1/2}\omega_{A}X_{A}^{-1/2}%
X_{AB}^{1/2}\leq\mu\ \tau_{AB}\right\}  \\
&  =\inf\Bigg\{\mu:X_{AB}^{1/2}X_{A}^{-1/2}\omega_{A}X_{A}^{-1/2}X_{AB}%
^{1/2}\leq\mu\left[  \left(  1-\delta\right)  X_{AB}^{1/2}X_{A}^{-1/2}%
\omega_{A}X_{A}^{-1/2}X_{AB}^{1/2}+\delta\xi_{AB}\right]  \Bigg\}\\
&  \leq\frac{1}{1-\delta}.
\end{align}
Then consider that%
\begin{align}
\langle\varphi^{X_{AB}}|_{AB\hat{A}\hat{B}}f(\tau_{AB}^{-1}\otimes Y_{\hat
{A}\hat{B}}^{T})|\varphi^{X_{AB}}\rangle_{AB\hat{A}\hat{B}}  & =\langle
\varphi^{X_{A}}|_{A\hat{A}}V^{\dag}f(\tau_{AB}^{-1}\otimes Y_{\hat{A}\hat{B}%
}^{T})V|\varphi^{X_{A}}\rangle_{A\hat{A}}\\
& \geq\langle\varphi^{X_{A}}|_{A\hat{A}}f(V^{\dag}\left[  \tau_{AB}%
^{-1}\otimes Y_{\hat{A}\hat{B}}^{T}\right]  V)|\varphi^{X_{A}}\rangle
_{A\hat{A}}\\
& \geq\langle\varphi^{X_{A}}|_{A\hat{A}}f(\left[  1-\delta\right]
^{-1}\left[  \omega_{A}^{-1}\otimes Y_{\hat{A}}^{T}\right]  )|\varphi^{X_{A}%
}\rangle_{A\hat{A}},
\end{align}
where the first inequality is a consequence of the operator Jensen inequality
and the second follows from
\eqref{eq:op-collapse-1-new}--\eqref{eq:op-collapse-last-new} and operator
anti-monotonicity of the function $f$. Taking a supremum over invertible
density operators $\tau_{AB}$, we then conclude that the following inequality
holds for all $\delta\in(0,1)$ and for all invertible density operators
$\omega_{A}$:%
\begin{equation}
\widetilde{Q}_{f}(X_{AB}\Vert Y_{AB})\geq\langle\varphi^{X_{A}}|_{A\hat{A}%
}f(\left[  1-\delta\right]  ^{-1}\left[  \omega_{A}^{-1}\otimes Y_{\hat{A}%
}^{T}\right]  )|\varphi^{X_{A}}\rangle_{A\hat{A}}.
\end{equation}
Since this inequality holds for all $\delta\in(0,1)$ and for all invertible
density operators $\omega_{A}$, we can appeal to continuity of the function $f$ (taking the limit $\delta \searrow 0$) and then take a supremum over all invertible
density operators $\omega_{A}$ to
conclude the desired inequality for the case $X_A > 0 $, but $X_{AB}$ is not invertible:%
\begin{equation}
\widetilde{Q}_{f}(X_{AB}\Vert Y_{AB})\geq\widetilde{Q}_{f}(X_{A}\Vert Y_{A}),
\end{equation}
as claimed.
\end{proof}

\begin{remark}
\label{rem:op-convex-inv}I stress once again here that if $X_{AB}$ and
$Y_{AB}$ are invertible, then we only require operator convexity of the
function $f$ in order to arrive at the inequality in
\eqref{eq:DP-partial-trace}. One can examine the steps in
\eqref{eq:op-convex-1}--\eqref{eq:op-convex-last} to see this.
\end{remark}

Based on Propositions~\ref{prop:iso-inv}\ and \ref{prop:mono-PT}\ and the
Stinespring dilation theorem \cite{S55}, we conclude the following
data-processing theorem for the optimized quantum $f$-divergences:

\begin{theorem}
[Quantum data processing]\label{thm:DP}Let $X_{S}$ and $Y_{S}$ be positive
semi-definite operators acting on a Hilbert space $\mathcal{H}_{S}$, and let
$\mathcal{N}_{S\rightarrow B}$ be a quantum channel taking operators acting on
$\mathcal{H}_{S}$ to operators acting on a Hilbert space $\mathcal{H}_{B}$.
Let $f$ be an operator anti-monotone function with domain $(0,\infty)$ and
range $\mathbb{R}$. Then the following inequality holds%
\begin{equation}
\widetilde{Q}_{f}(X_{S}\Vert Y_{S})\geq\widetilde{Q}_{f}(\mathcal{N}%
_{S\rightarrow B}(X_{S})\Vert\mathcal{N}_{S\rightarrow B}(Y_{S})).
\end{equation}

\end{theorem}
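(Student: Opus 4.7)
The plan is to combine the two preceding propositions with the Stinespring dilation theorem, which has been foreshadowed throughout the section. Specifically, by Stinespring dilation, for every quantum channel $\mathcal{N}_{S \to B}$ there exists an auxiliary Hilbert space $\mathcal{H}_E$ and an isometry $U^{\mathcal{N}}_{S \to BE}$ satisfying
\begin{equation}
\mathcal{N}_{S \to B}(Z_S) = \operatorname{Tr}_E\{ U^{\mathcal{N}}_{S \to BE}\, Z_S\, (U^{\mathcal{N}}_{S \to BE})^\dag \}
\end{equation}
for every operator $Z_S$ on $\mathcal{H}_S$. I would apply this in particular to $Z_S = X_S$ and $Z_S = Y_S$, so that the channel's action on each of the two arguments of $\widetilde{Q}_f$ is decomposed into an isometric embedding followed by a partial trace.

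Next, I would invoke Proposition~\ref{prop:iso-inv} (isometric invariance) with the isometry $U = U^{\mathcal{N}}_{S \to BE}$ to obtain the equality
\begin{equation}
\widetilde{Q}_f(X_S \Vert Y_S) = \widetilde{Q}_f\!\left( U^{\mathcal{N}}_{S \to BE} X_S (U^{\mathcal{N}}_{S \to BE})^\dag \,\Vert\, U^{\mathcal{N}}_{S \to BE} Y_S (U^{\mathcal{N}}_{S \to BE})^\dag \right).
\end{equation}
Then I would apply Proposition~\ref{prop:mono-PT} (monotonicity under partial trace) with $A = B$ and the auxiliary system in the partial trace set to $E$, yielding
\begin{equation}
\widetilde{Q}_f\!\left( U^{\mathcal{N}}_{S \to BE} X_S (U^{\mathcal{N}}_{S \to BE})^\dag \,\Vert\, U^{\mathcal{N}}_{S \to BE} Y_S (U^{\mathcal{N}}_{S \to BE})^\dag \right) \geq \widetilde{Q}_f(\mathcal{N}_{S \to B}(X_S) \Vert \mathcal{N}_{S \to B}(Y_S)),
\end{equation}
after identifying the partial traces with $\mathcal{N}_{S \to B}(X_S)$ and $\mathcal{N}_{S \to B}(Y_S)$ via Stinespring. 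Chaining the two displays gives the claimed inequality.

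Since both Propositions~\ref{prop:iso-inv} and~\ref{prop:mono-PT} have already been established under the hypothesis that $f$ is operator anti-monotone with domain $(0,\infty)$ and range $\mathbb{R}$, no additional hypothesis on $f$ is needed here, and there is no real obstacle to surmount at this stage. The intellectual content of the data-processing theorem has been absorbed into the isometric-invariance proposition (which already required operator anti-monotonicity for the nontrivial direction) and into the partial-trace proposition (whose proof built the Petz recovery isometry and invoked the operator Jensen inequality). The only care needed is bookkeeping: ensuring that the isometry produced by Stinespring has the shape assumed in Proposition~\ref{prop:iso-inv} (which it does, being a norm-preserving linear map between finite-dimensional Hilbert spaces) and that the partial trace over $E$ matches the form $\operatorname{Tr}_B$ used in Proposition~\ref{prop:mono-PT} up to a harmless relabeling $A \leftrightarrow B$ and $B \leftrightarrow E$.
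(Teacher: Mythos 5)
Your proposal is correct and follows exactly the route the paper takes: Theorem~\ref{thm:DP} is stated there as an immediate consequence of Propositions~\ref{prop:iso-inv} and~\ref{prop:mono-PT} together with the Stinespring dilation theorem, and your chaining of isometric invariance followed by monotonicity under the partial trace over the environment $E$ is precisely that argument, with the bookkeeping spelled out.
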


\section{Examples of optimized quantum $f$-divergences}

\label{sec:examples-opt-f-div}I now show how several known quantum divergences
are particular examples of an optimized quantum $f$-divergence, including the
quantum relative entropy \cite{U62}\ and the sandwiched R\'{e}nyi relative
quasi-entropies \cite{MDSFT13,WWY13}. The result will be that
Theorem~\ref{thm:DP} recovers quantum data processing for the sandwiched
R\'{e}nyi relative entropies for the full range of parameters for which it is
known to hold. Thus, one benefit of Theorem~\ref{thm:DP} and earlier work of
\cite{P85,P86,TCR09}\ is a single, unified approach, based on the operator
Jensen inequality \cite{HP03}, for establishing quantum data processing for
all of the Petz-- and sandwiched R\'{e}nyi relative entropies for the full
parameter ranges for which data processing is known to hold.

\subsection{Quantum relative entropy as optimized quantum $f$-divergence}

\label{sec:q-rel-ent}Let $\tau$ be an invertible state, $\varepsilon>0$, and
$X$ and $Y$ positive semi-definite. Let $\overline{X}=X/\operatorname{Tr}%
\{X\}$. Pick the function%
\begin{equation}
f(x)=-\log x,
\end{equation}
which is an operator anti-monotone function with domain $(0,\infty)$ and range
$\mathbb{R}$, and we find that%
\begin{align}
&  \frac{1}{\operatorname{Tr}\{X\}}\langle\varphi^{X}|_{S\hat{S}}\left[
-\log(\tau_{S}^{-1}\otimes\left(  Y+\varepsilon\Pi_{Y}^{\perp}\right)
_{\hat{S}}^{T})\right]  |\varphi^{X}\rangle_{S\hat{S}}\nonumber\\
&  =\langle\varphi^{\overline{X}}|_{S\hat{S}}\left[  \log(\tau_{S})\otimes
I_{\hat{S}}-I_{S}\otimes\log\left(  Y+\varepsilon\Pi_{Y}^{\perp}\right)
_{\hat{S}}^{T}\right]  |\varphi^{\overline{X}}\rangle_{S\hat{S}}\\
&  =\langle\varphi^{\overline{X}}|_{S\hat{S}}\log(\tau_{S})\otimes I_{\hat{S}%
}|\varphi^{\overline{X}}\rangle_{S\hat{S}}-\langle\varphi^{\overline{X}%
}|_{S\hat{S}}I_{S}\otimes\log\left(  Y+\varepsilon\Pi_{Y}^{\perp}\right)
_{\hat{S}}^{T}|\varphi^{\overline{X}}\rangle_{S\hat{S}}\\
&  =\operatorname{Tr}\{\overline{X}\log\tau\}-\operatorname{Tr}\{\overline
{X}\log(Y+\varepsilon\Pi_{Y}^{\perp})\}\\
&  \leq\operatorname{Tr}\{\overline{X}\log\overline{X}\}-\operatorname{Tr}%
\{\overline{X}\log(Y+\varepsilon\Pi_{Y}^{\perp})\}\\
&  =D(\overline{X}\Vert Y+\varepsilon\Pi_{Y}^{\perp}).
\end{align}
The inequality is a consequence of Klein's inequality \cite{O31} (see also
\cite{R02}), establishing that the optimal $\tau$ is set to $\overline{X}%
$.\footnote{Technically, we would require an invertible $\tau$ that
approximates $\overline{X}$ arbitrarily well in order to achieve equality in
Klein's inequality. One can alternatively establish the inequality
$\operatorname{Tr}\{\overline{X}\log\overline{X}\}\geq\operatorname{Tr}%
\{\overline{X}\log\tau\}$ by employing the non-negativity of quantum relative
entropy $D(\overline{X}\Vert\tau)\geq0$ for quantum states.} Now taking a
supremum over $\varepsilon>0$, we find that%
\begin{equation}
\widetilde{Q}_{-\log(\cdot)}(X\Vert Y)=\operatorname{Tr}\{X\}D(\overline
{X}\Vert Y),
\end{equation}
where the quantum relative entropy $D(\overline{X}\Vert Y)$ is defined as
\cite{U62}%
\begin{equation}
D(\overline{X}\Vert Y)=\operatorname{Tr}\{\overline{X}\left[  \log\overline
{X}-\log Y\right]  \}
\end{equation}
if $\operatorname{supp}(X)\subseteq\operatorname{supp}(Y)$ and $D(\overline
{X}\Vert Y)=+\infty$ otherwise.

\subsection{Sandwiched R\'enyi relative quasi-entropy as optimized quantum
$f$-divergence}

Take $\tau$, $\varepsilon$, $X$, and $Y$ as defined in
Section~\ref{sec:q-rel-ent}. For $\alpha\in\lbrack1/2,1)$, pick the function%
\begin{equation}
f(x)=-x^{\left(  1-\alpha\right)  /\alpha},
\end{equation}
which is an operator anti-monotone function with domain $(0,\infty)$ and range
$\mathbb{R}$. Note that this is a reparametrization of $-x^{\beta}$ for
$\beta\in(0,1]$. I now show that%
\begin{equation}
\widetilde{Q}_{-\left(  \cdot\right)  ^{\left(  1-\alpha\right)  /\alpha}%
}(X\Vert Y)=-\left\Vert Y^{\left(  1-\alpha\right)  /2\alpha}XY^{\left(
1-\alpha\right)  /2\alpha}\right\Vert _{\alpha},
\end{equation}
which is the known expression for sandwiched R\'{e}nyi relative quasi-entropy
for $\alpha\in\lbrack1/2,1)$ \cite{MDSFT13,WWY13}. To see this, consider that%
\begin{align}
&  -\langle\varphi^{X}|_{S\hat{S}}\left[  \tau_{S}^{-1}\otimes\left(
Y+\varepsilon\Pi_{Y}^{\perp}\right)  _{\hat{S}}^{T}\right]  ^{\left(
1-\alpha\right)  /\alpha}|\varphi^{X}\rangle_{S\hat{S}}\nonumber\\
&  =-\langle\varphi^{X}|_{S\hat{S}}\tau_{S}^{\left(  \alpha-1\right)  /\alpha
}\otimes\left(  \left(  Y+\varepsilon\Pi_{Y}^{\perp}\right)  _{\hat{S}}%
^{T}\right)  ^{\left(  1-\alpha\right)  /\alpha}|\varphi^{X}\rangle_{S\hat{S}%
}\\
&  =-\langle\Gamma|_{S\hat{S}}X_{S}^{1/2}\tau_{S}^{\left(  \alpha-1\right)
/\alpha}X_{S}^{1/2}\otimes\left(  \left(  Y+\varepsilon\Pi_{Y}^{\perp}\right)
_{\hat{S}}^{T}\right)  ^{\left(  1-\alpha\right)  /\alpha}|\Gamma
\rangle_{S\hat{S}}\\
&  =-\operatorname{Tr}\left\{  X^{1/2}\tau^{\left(  \alpha-1\right)  /\alpha
}X^{1/2}(Y+\varepsilon\Pi_{Y}^{\perp})^{\left(  1-\alpha\right)  /\alpha
}\right\} \\
&  =-\operatorname{Tr}\left\{  X^{1/2}(Y+\varepsilon\Pi_{Y}^{\perp})^{\left(
1-\alpha\right)  /\alpha}X^{1/2}\tau^{\left(  \alpha-1\right)  /\alpha
}\right\}  .
\end{align}
The formulas in the above development are related to those given in the proof
of \cite[Lemma~19]{MDSFT13}. Now optimizing over invertible states $\tau$ and
employing H\"{o}lder duality \cite{B97}, in the form of the reverse H\"{o}lder inequality
and as observed in \cite{MDSFT13}, we find that%
\begin{equation}
\sup_{\substack{\tau>0,\\\operatorname{Tr}\{\tau\}=1}}\left[
-\operatorname{Tr}\left\{  X^{1/2}(Y+\varepsilon\Pi_{Y}^{\perp})^{\left(
1-\alpha\right)  /\alpha}X^{1/2}\tau^{\left(  \alpha-1\right)  /\alpha
}\right\}  \right]  =-\left\Vert X^{1/2}(Y+\varepsilon\Pi_{Y}^{\perp
})^{\left(  1-\alpha\right)  /\alpha}X^{1/2}\right\Vert _{\alpha},
\end{equation}
where for positive semi-definite $Z$, we define%
\begin{equation}
\left\Vert Z\right\Vert _{\alpha}=\left[  \operatorname{Tr}\{Z^{\alpha
}\}\right]  ^{1/\alpha}.
\end{equation}
Now taking the limit $\varepsilon\searrow0$, we get that%
\begin{equation}
\widetilde{Q}_{-\left(  \cdot\right)  ^{\left(  1-\alpha\right)  /\alpha}%
}(X\Vert Y)=-\left\Vert X^{1/2}Y^{\left(  1-\alpha\right)  /\alpha}%
X^{1/2}\right\Vert _{\alpha}=-\left\Vert Y^{\left(  1-\alpha\right)  /2\alpha
}XY^{\left(  1-\alpha\right)  /2\alpha}\right\Vert _{\alpha},
\end{equation}
which is the sandwiched R\'{e}nyi relative quasi-entropy for the range
$\alpha\in\lbrack1/2,1)$. The sandwiched R\'{e}nyi relative entropy itself is
defined up to a normalization factor as \cite{MDSFT13,WWY13}%
\begin{equation}
\widetilde{D}_{\alpha}(X\Vert Y)=\frac{\alpha}{\alpha-1}\log\left\Vert
Y^{\left(  1-\alpha\right)  /2\alpha}XY^{\left(  1-\alpha\right)  /2\alpha
}\right\Vert _{\alpha}. \label{eq:sandwiched-Renyi}%
\end{equation}
Thus, Theorem~\ref{thm:DP}\ implies quantum data processing for the sandwiched
R\'{e}nyi relative entropy%
\begin{equation}
\widetilde{D}_{\alpha}(X\Vert Y)\geq\widetilde{D}_{\alpha}(\mathcal{N}%
(X)\Vert\mathcal{N}(Y)),
\end{equation}
for the parameter range $\alpha\in\lbrack1/2,1)$, which is a result previously
established in \cite{FL13}.

For $\alpha\in(1,\infty]$, pick the function%
\begin{equation}
f(x)=x^{\left(  1-\alpha\right)  /\alpha},
\end{equation}
which is an operator anti-monotone function with domain $(0,\infty)$ and range
$\mathbb{R}$. Note that this is a reparametrization of $x^{\beta}$ for
$\beta\in\lbrack-1,0)$. I now show that%
\begin{equation}
\widetilde{Q}_{\left(  \cdot\right)  ^{\left(  1-\alpha\right)  /\alpha}%
}(X\Vert Y)=\left\{
\begin{array}
[c]{cc}%
\left\Vert Y^{\left(  1-\alpha\right)  /2\alpha}XY^{\left(  1-\alpha\right)
/2\alpha}\right\Vert _{\alpha} & \text{if }\operatorname{supp}(X)\subseteq
\operatorname{supp}(Y)\\
+\infty & \text{else}%
\end{array}
\right.  ,
\end{equation}
which is the known expression for sandwiched R\'{e}nyi relative quasi-entropy
for $\alpha\in(1,\infty]$ \cite{MDSFT13,WWY13}. To see this, consider that the
same development as above gives that%
\begin{equation}
\langle\varphi^{X}|_{S\hat{S}}(\tau_{S}^{-1}\otimes\left(  Y+\varepsilon
\Pi_{Y}^{\perp}\right)  _{\hat{S}}^{T})^{\left(  1-\alpha\right)  /\alpha
}|\varphi^{X}\rangle_{S\hat{S}}=\operatorname{Tr}\left\{  X^{1/2}%
(Y+\varepsilon\Pi_{Y}^{\perp})^{\left(  1-\alpha\right)  /\alpha}X^{1/2}%
\tau^{\left(  \alpha-1\right)  /\alpha}\right\}  .
\end{equation}
Again employing H\"{o}lder duality, as observed in \cite{MDSFT13}, we find
that%
\begin{equation}
\sup_{\tau>0,\operatorname{Tr}\{\tau\}=1}\operatorname{Tr}\left\{
X^{1/2}(Y+\varepsilon\Pi_{Y}^{\perp})^{\left(  1-\alpha\right)  /\alpha
}X^{1/2}\tau^{\left(  \alpha-1\right)  /\alpha}\right\}  =\left\Vert
X^{1/2}(Y+\varepsilon\Pi_{Y}^{\perp})^{\left(  1-\alpha\right)  /\alpha
}X^{1/2}\right\Vert _{\alpha},
\end{equation}
Now taking the limit $\varepsilon\searrow0$, we get that%
\begin{equation}
\widetilde{Q}_{\left(  \cdot\right)  ^{\left(  1-\alpha\right)  /\alpha}%
}(X\Vert Y)=\left\Vert X^{1/2}Y^{\left(  1-\alpha\right)  /\alpha}%
X^{1/2}\right\Vert _{\alpha}=\left\Vert Y^{\left(  1-\alpha\right)  /2\alpha
}XY^{\left(  1-\alpha\right)  /2\alpha}\right\Vert _{\alpha},
\end{equation}
where the equalities hold if $\operatorname{supp}(X)\subseteq
\operatorname{supp}(Y)$ and otherwise $\widetilde{Q}_{\left(  \cdot\right)
^{\left(  1-\alpha\right)  /\alpha}}(X\Vert Y)=+\infty$, as observed in
\cite{MDSFT13}. The sandwiched R\'{e}nyi relative entropy itself is defined up
to a normalization factor as in \eqref{eq:sandwiched-Renyi} if
$\operatorname{supp}(X)\subseteq\operatorname{supp}(Y)$ and otherwise
$\widetilde{D}_{\alpha}(X\Vert Y)=+\infty$ for $\alpha\in(1,\infty]$
\cite{MDSFT13,WWY13}. Thus, Theorem~\ref{thm:DP}\ implies quantum data
processing for the sandwiched R\'{e}nyi relative entropy%
\begin{equation}
\widetilde{D}_{\alpha}(X\Vert Y)\geq\widetilde{D}_{\alpha}(\mathcal{N}%
(X)\Vert\mathcal{N}(Y)),
\end{equation}
for the parameter range $\alpha\in(1,\infty]$, which is a result previously
established in full by \cite{FL13,B13monotone,MO13} and for $\alpha\in(1,2]$
by \cite{MDSFT13,WWY13}.

\subsection{Optimized $\alpha$-divergence: monotonicity under partial trace
for invertible density operators}

Interestingly, for $\alpha\in\lbrack1/3,1/2]$, the function%
\begin{equation}
f(x)=x^{(1-\alpha)/\alpha}%
\end{equation}
is operator convex on the domain $(0,\infty)$ and with range $\mathbb{R}$.
Note that this is a reparametrization of $x^{\beta}$ for $\beta\in\lbrack
1,2]$. Thus, by following the same development as before, for positive
definite $X$ and $Y$ we find that%
\begin{equation}
\langle\varphi^{X}|_{S\hat{S}}(\tau_{S}^{-1}\otimes Y_{\hat{S}}^{T})^{\beta
}|\varphi^{X}\rangle_{S\hat{S}}=\operatorname{Tr}\left\{  X^{1/2}Y^{\beta
}X^{1/2}\tau^{-\beta}\right\}  .
\end{equation}
Now optimizing over $\tau$, we find that the following function%
\begin{equation}
\widetilde{Q}_{(\cdot)^{\beta}}(X\Vert Y)=\sup_{\tau>0,\operatorname{Tr}%
\{\tau\}=1}\operatorname{Tr}\left\{  X^{1/2}Y^{\beta}X^{1/2}\tau^{-\beta
}\right\}
\end{equation}
is monotone with respect to partial trace for $\beta\in\lbrack1,2]$. That is,
the inequality%
\begin{equation}
\widetilde{Q}_{(\cdot)^{\beta}}(X_{AB}\Vert Y_{AB})\geq\widetilde{Q}%
_{(\cdot)^{\beta}}(X_{A}\Vert Y_{A})
\end{equation}
holds for $\beta\in\lbrack1,2]$ and positive definite $X_{AB}$ and $Y_{AB}$,
by applying Remark~\ref{rem:op-convex-inv}.

Take note that $\widetilde{Q}_{(\cdot)^{\beta}}(X\Vert Y)$ for $\beta
\in\lbrack1,2]$ is not a sandwiched R\'enyi relative quasi-entropy because the
optimization over $\tau$ goes the opposite way when compared to that for the
sandwiched R\'enyi relative entropy for $\alpha\in\lbrack1/3,1/2]$. This is
consistent with the fact that data processing is known not to hold for the
sandwiched R\'enyi relative entropy for $\alpha\in(0,1/2)$
\cite{MDSFT13,DL14,BFT15}.

\section{On Petz's quantum $f$-divergence}

\label{sec:petz-f}I now discuss in more detail the relation between the
optimized quantum $f$-divergence and the Petz quantum $f$-divergence from
\cite{P85,P86}. In brief, we find that the Petz $f$-divergence can be
recovered by replacing $\tau$ in Definition~\ref{def:opt-f-div} with
$X+\delta\pi_{X}^{\perp}$.

\begin{definition}
[Petz quantum $f$-divergence]\label{def:petz-f-div}Let $f$ be a continuous
function with domain $(0,\infty)$ and range $\mathbb{R}$. For positive
semi-definite operators $X$ and $Y$ acting on a Hilbert space $\mathcal{H}%
_{S}$, the Petz quantum $f$-divergence is defined as%
\begin{equation}
Q_{f}(X\Vert Y)\equiv\sup_{\varepsilon>0}\lim_{\delta\searrow0}\langle
\varphi^{X}|_{S\hat{S}}f\!\left(  \left[  X_{S}+\delta\pi_{X}^{\perp}\right]
^{-1}\otimes\left[  Y_{\hat{S}}+\varepsilon\Pi_{Y}^{\perp}\right]
^{T}\right)  |\varphi^{X}\rangle_{S\hat{S}},
\end{equation}
where $\pi_{X}^{\perp}=\Pi_{X}^{\perp}/\operatorname{Tr}\{\Pi_{X}^{\perp}\}$
is the maximally mixed state on the kernel of $X$ and the rest of the notation
is the same as in Definition$~$\ref{def:opt-f-div}. If the kernel of $X$ is
equal to zero, then we set $\pi_{X}^{\perp}=0$.
\end{definition}

Let spectral decompositions of positive semi-definite $X$ and positive
definite $Y$ be given as%
\begin{equation}
X=\sum_{x}\lambda_{x}|\psi^{x}\rangle\langle\psi^{x}|,\qquad Y=\sum_{y}\mu
_{y}|\phi^{y}\rangle\langle\phi^{y}|.
\end{equation}
By following the same development needed to arrive at
\eqref{eq:f-divergence-spec-form}, we see that $Q_{f}(X\Vert Y)$ can be
written for non-invertible $X$ and invertible $Y$ as%
\begin{align}
Q_{f}(X\Vert Y)  &  =\lim_{\delta\searrow0}\sum_{y}\Bigg[\sum_{x:\lambda
_{x}\neq0}f(\mu_{y}\lambda_{x}^{-1})\operatorname{Tr}\{X^{1/2}|\psi^{x}%
\rangle\langle\psi^{x}|X^{1/2}|\phi^{y}\rangle\langle\phi^{y}|\}\nonumber\\
&  \qquad\qquad+f(\mu_{y}\operatorname{Tr}\{\Pi_{X}^{\perp}\}\delta
^{-1})\operatorname{Tr}\{X^{1/2}\Pi_{X}^{\perp}X^{1/2}|\phi^{y}\rangle
\langle\phi^{y}|\}\Bigg]\\
&  =\lim_{\delta\searrow0}\sum_{y}\sum_{x:\lambda_{x}\neq0}f(\mu_{y}%
\lambda_{x}^{-1})\operatorname{Tr}\{X^{1/2}|\psi^{x}\rangle\langle\psi
^{x}|X^{1/2}|\phi^{y}\rangle\langle\phi^{y}|\}\\
&  =\sum_{y}\sum_{x:\lambda_{x}\neq0}\lambda_{x}f(\mu_{y}\lambda_{x}%
^{-1})\left\vert \langle\psi^{x}|\phi^{y}\rangle\right\vert ^{2}.
\end{align}
Note that we get the same formula for $Q_{f}(X\Vert Y)$ if $X$ is invertible.
For non-invertible $Y$, we just substitute $Y+\varepsilon\Pi_{Y}^{\perp}$ and
take the supremum over $\varepsilon>0$ at the end.

The next concern is about quantum data processing with the Petz $f$-divergence
as defined above. To show this, we take $f$ to be an operator anti-monotone
function with domain $(0,\infty)$ and range $\mathbb{R}$. As discussed in
Section~\ref{sec:DP}, one can establish data processing by showing isometric
invariance and monotonicity under partial trace. Isometric invariance of
$Q_{f}(X\Vert Y)$ follows from the same proof as given in
Proposition~\ref{prop:iso-inv}. Monotonicity of $Q_{f}(X_{AB}\Vert Y_{AB})$
under partial trace breaks down into three cases depending on invertibility of
$X_{AB}$ or $X_{A}$, as discussed in the proof of
Proposition~\ref{prop:mono-PT}. For the proof, we assume as previously done
that $Y_{AB}$ is invertible throughout. If it is not, then
Definition~\ref{def:petz-f-div}\ forces it to be invertible and then a
supremum over $\varepsilon>0$ is finally taken at the end.

\begin{enumerate}
\item The case when $X_{AB}$ is invertible is already handled by Petz's proof
from \cite{P85,P86}, which relies on the operator Jensen inequality
\cite{HP03}. In this case, the operator $X_{AB}+\delta\pi_{AB}^{\perp}$
reduces to $X_{AB}$ because $\Pi_{AB}^{\perp}=0$.

\item The case when $X_{AB}$ is not invertible but $X_{A}$ is can be
understood as an appeal to continuity, as discussed in
Remark~\ref{rem:cont-appeal}. For this case, we take the operator $\tau_{AB}$
for some $\delta_{1}\in(0,1)$ to be %
$\left(  1-\delta_{1}\right)
X_{AB}+\delta_{1}\pi_{X_{AB}}^{\perp}$, which is a positive definite operator.
The rest of the proof proceeds the same and then the monotonicity under
partial trace holds for this case.

\item As far as I can tell, the case when $X_{A}$ is not invertible was not
discussed in several papers of Petz \textit{et al}%
.~\cite{P85,P86,PS09,P10,P10a}, and it was only considered recently in
\cite[Proposition~3.12]{HM17}. However, the method I have given in the proof
of Proposition~\ref{prop:mono-PT}\ appears to be different. Also,
Remark~\ref{rem:cont-appeal} in the appendix discusses how this approach
arguably extends beyond a mere appeal to continuity. For this case, we take
the channel in \eqref{eq:extended-petz-channel}\ to be%
\begin{equation}
Z_{A}\rightarrow X_{AB}^{1/2}\left(  \left[  X_{A}^{-1/2}Z_{A}X_{A}%
^{-1/2}\right]  \otimes I_{B}\right)  X_{AB}^{1/2}+\operatorname{Tr}%
\{\Pi_{X_{A}}^{\perp}Z_{A}\}\pi_{X_{AB}}^{\perp}.
\end{equation}
Inputting $X_{A}+\delta\pi_{A}^{\perp}$ then leads to the output
$X_{AB}+\delta\pi_{AB}^{\perp}$, which is a positive definite operator. The
rest of the proof proceeds the same and then the monotonicity under partial
trace holds for this case.
\end{enumerate}

Special and interesting cases of the Petz $f$-divergence are found by taking%
\begin{align}
f(x)  &  =-\log x,\\
f(x)  &  =-x^{\beta}\text{\qquad for }\beta\in(0,1],\\
f(x)  &  =x^{\beta}\text{\qquad for }\beta\in\lbrack-1,0).
\end{align}
Each of these functions are operator anti-monotone with domain $(0,\infty)$
and range $\mathbb{R}$. By following similar reasoning as in
Section~\ref{sec:examples-opt-f-div} to simplify $Q_{f}$ and by applying the
above arguments for data processing, we find that all of the following
quantities obey the data processing inequality:%
\begin{align}
Q_{-\log(\cdot)}(X\Vert Y)  &  =\operatorname{Tr}\{X\}D(\overline{X}\Vert
Y),\\
Q_{-(\cdot)^{\beta}}(X\Vert Y)  &  =-\operatorname{Tr}\{X^{1-\beta}Y^{\beta
}\},\text{ for }\beta\in(0,1],\\
Q_{(\cdot)^{\beta}}(X\Vert Y)  &  =\left\{
\begin{array}
[c]{cc}%
\operatorname{Tr}\{X^{1-\beta}Y^{\beta}\} & \text{if\ }\operatorname{supp}%
(X)\subseteq\operatorname{supp}(Y)\\
+\infty & \text{else}%
\end{array}
\right.  ,\ \text{for }\beta\in\lbrack-1,0),
\end{align}
where again $\overline{X}=X/\operatorname{Tr}\{X\}$. By a reparametrization
$\alpha=1-\beta$, we find that the latter two quantities are directly related
to the Petz R\'enyi relative entropy, defined as%
\begin{equation}
D_{\alpha}(X\Vert Y)\equiv\left\{
\begin{array}
[c]{cc}%
\frac{1}{\alpha-1}\log\operatorname{Tr}\{X^{\alpha}Y^{1-\alpha}\} &
\text{if\ }\operatorname{supp}(X)\subseteq\operatorname{supp}(Y)\text{ and
}\alpha>1\\
+\infty & \text{else}%
\end{array}
\right.  .
\end{equation}
Thus, the data processing proof establishes the data processing inequality for
$D_{\alpha}(X\Vert Y)$ for $\alpha\in\lbrack0,1)\cup(1,2]$, which is the range
for which it was already known to hold from prior work \cite{P86,TCR09}.

\begin{remark}
One beneficial aspect of the present paper is that we now see that there is a
single, unified approach, based on the operator Jensen inequality, for
establishing the data processing inequality for both the Petz--R\'enyi
relative entropy for $\alpha\in\lbrack0,1)\cup(1,2]$ and the sandwiched
R\'enyi relative entropy for $\alpha\in\lbrack1/2,1)\cup(1,\infty]$, the full
ranges of $\alpha$ for which the data processing inequality is already known
from \cite{P85,P86,TCR09,MDSFT13,WWY13,FL13,B13monotone,MO13} to hold for
these quantities. Prior to the present paper, there were a variety of
different ways for establishing the data processing inequality for the
sandwiched R\'enyi relative entropy, which can be found in
\cite{MDSFT13,WWY13,FL13,B13monotone,MO13}.
\end{remark}

Interestingly, for $\beta\in\lbrack1,2]$, the function%
\begin{equation}
f(x)=x^{\beta}%
\end{equation}
is operator convex on the domain $(0,\infty)$ and with range $\mathbb{R}$.
Thus, for positive definite $X$ and $Y$ we find that%
\begin{equation}
Q_{(\cdot)^{\beta}}(X\Vert Y)=\operatorname{Tr}\left\{  X^{1-\beta}Y^{\beta
}\right\}
\end{equation}
is monotone with respect to partial trace for $\beta\in\lbrack1,2]$. That is,
the inequality%
\begin{equation}
Q_{(\cdot)^{\beta}}(X_{AB}\Vert Y_{AB})\geq Q_{(\cdot)^{\beta}}(X_{A}\Vert
Y_{A})
\end{equation}
holds for $\beta\in\lbrack1,2]$ and positive definite $X_{AB}$ and $Y_{AB}$,
by applying Remark~\ref{rem:op-convex-inv}. By reparametrizing with
$\alpha=1-\beta$, we find that the following inequality holds for positive
definite $X_{AB}$ and $Y_{AB}$ and $\alpha\in\lbrack-1,0]$:%
\begin{equation}
D_{\alpha}(X_{AB}\Vert Y_{AB})\leq D_{\alpha}(X_{A}\Vert Y_{A}).
\end{equation}
Note that there is trivially an equality when $\alpha=0$, under the assumption
that $X_{AB}$ and $Y_{AB}$ are positive definite, because%
\begin{equation}
D_{0}(X_{AB}\Vert Y_{AB})=-\log\operatorname{Tr}\{Y_{AB}\}=-\log
\operatorname{Tr}\{Y_{A}\}=D_{0}(X_{A}\Vert Y_{A}).
\end{equation}

\subsection{Inequality for sandwiched and Petz--R\'{e}nyi relative entropies}

The development above motivates the following inequality relating the
sandwiched and Petz--R\'{e}nyi relative entropies. The same inequality was
shown in \cite{J18} when $X$ and $Y$ are normal states of an arbitrary von
Neumann algebra and for $\alpha> 1$, whereas the following proposition
considers the case when $X$ and $Y$ are positive semi-definite operators
acting on a finite-dimensional Hilbert space and the range $\alpha\in\lbrack
1/2,1)\cup(1,\infty)$.

\begin{proposition}
Let $X$ and $Y$ be positive semi-definite operators such that $X,Y\neq0$. Then
the following inequality holds for $\alpha\in\lbrack1/2,1)\cup(1,\infty)$:%
\begin{equation}
\widetilde{D}_{\alpha}(X\Vert Y)\geq D_{\left(  2\alpha-1\right)  /\alpha
}(X\Vert Y)-\log\operatorname{Tr}\{X\}.
\end{equation}

\end{proposition}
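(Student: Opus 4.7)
The plan is to substitute the specific choice $\tau = \overline{X} := X/\operatorname{Tr}\{X\}$ (regularized to be strictly positive) into the variational characterization of the sandwiched R\'enyi quasi-entropy derived in Section~\ref{sec:examples-opt-f-div}. Recall that, for $\alpha \in (1,\infty)$, one has the supremum representation $\|Y^{(1-\alpha)/2\alpha} X Y^{(1-\alpha)/2\alpha}\|_\alpha = \sup_\tau \operatorname{Tr}\{X^{1/2} Y^{(1-\alpha)/\alpha} X^{1/2} \tau^{(\alpha-1)/\alpha}\}$, while for $\alpha \in [1/2,1)$ the same expression is an infimum over $\tau > 0$ with $\operatorname{Tr}\{\tau\} = 1$. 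Either representation yields a one-sided bound on the norm once a specific $\tau$ is plugged in, and in both ranges the sign of $\alpha/(\alpha-1)$ conspires to give the same lower bound on $\widetilde{D}_\alpha(X\Vert Y)$ after taking $\log$.

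Concretely, using cyclicity of the trace together with the spectral-calculus identity $X^{1/2} X^{(\alpha-1)/\alpha} X^{1/2} = X^{(2\alpha-1)/\alpha}$, the substitution gives
\begin{equation}
\operatorname{Tr}\{X^{1/2} Y^{(1-\alpha)/\alpha} X^{1/2} \overline{X}^{(\alpha-1)/\alpha}\} = (\operatorname{Tr}\{X\})^{-(\alpha-1)/\alpha} \operatorname{Tr}\{X^{(2\alpha-1)/\alpha} Y^{(1-\alpha)/\alpha}\}.
\end{equation}
Taking $\log$ and multiplying by $\alpha/(\alpha-1)$, which is positive for $\alpha > 1$ and negative for $\alpha \in [1/2,1)$ and thus flips the inequality in precisely the right way to match the direction supplied by sup versus inf, produces
\begin{equation}
\widetilde{D}_\alpha(X\Vert Y) \geq -\log\operatorname{Tr}\{X\} + \tfrac{\alpha}{\alpha-1}\log\operatorname{Tr}\{X^{(2\alpha-1)/\alpha} Y^{(1-\alpha)/\alpha}\} = D_{(2\alpha-1)/\alpha}(X\Vert Y) - \log\operatorname{Tr}\{X\},
\end{equation}
where the final equality uses $1-(2\alpha-1)/\alpha = (1-\alpha)/\alpha$ to identify the Petz R\'enyi quantity.

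The main technical nuisance is that $\overline{X}$ need not be invertible, whereas the variational formulas are stated for $\tau > 0$. I would handle this by using the family $\tau_\delta = (X + \delta I_S)/(\operatorname{Tr}\{X\} + \delta |S|)$, which is positive definite for every $\delta > 0$, and then letting $\delta \searrow 0$. Because $X^{1/2}(\cdot) X^{1/2}$ restricts the trace to $\operatorname{supp}(X)$, on which $\tau_\delta$ commutes with $X$, the functional calculus delivers the clean limit above. The degenerate cases where $\operatorname{supp}(X) \not\subseteq \operatorname{supp}(Y)$ for $\alpha > 1$, or where $\operatorname{supp}(X) \cap \operatorname{supp}(Y) = \emptyset$ for $\alpha \in [1/2,1)$, force both sides of the target inequality to equal $+\infty$, so the claim holds trivially there and the main content lies in the non-trivial support configuration handled by the substitution.
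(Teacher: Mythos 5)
Your proposal is correct and follows essentially the same route as the paper: substitute $\tau=\overline{X}=X/\operatorname{Tr}\{X\}$ into the H\"older-duality variational formula for $\Vert X^{1/2}Y^{(1-\alpha)/\alpha}X^{1/2}\Vert_{\alpha}$, simplify via cyclicity to $\operatorname{Tr}\{X^{(2\alpha-1)/\alpha}Y^{(1-\alpha)/\alpha}\}[\operatorname{Tr}\{X\}]^{(1-\alpha)/\alpha}$, and multiply by $\alpha/(\alpha-1)$, whose sign change across $\alpha=1$ matches the switch between supremum and infimum. The only difference is cosmetic: the paper simply assumes $X$ and $Y$ invertible without loss of generality, whereas you spell out the $\delta$-regularization of $\overline{X}$ and the degenerate support cases explicitly.
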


\begin{proof}
Without loss of generality, let us assume that $X$ and $Y$ are invertible. The
above inequality follows simply by picking the state $\tau=X/\operatorname{Tr}%
\{X\}$. Indeed, let $\alpha\in\lbrack1/2,1)$. Consider that%
\begin{align}
-\left\Vert X^{1/2}Y^{\left(  1-\alpha\right)  /\alpha}X^{1/2}\right\Vert
_{\alpha}  &  =\sup_{\substack{\tau>0,\\\operatorname{Tr}\{\tau\}=1}}\left[
-\operatorname{Tr}\left\{  X^{1/2}Y^{\left(  1-\alpha\right)  /\alpha}%
X^{1/2}\tau^{\left(  \alpha-1\right)  /\alpha}\right\}  \right] \\
&  \geq-\operatorname{Tr}\left\{  X^{1/2}Y^{\left(  1-\alpha\right)  /\alpha
}X^{1/2}\left(  X/\operatorname{Tr}\{X\}\right)  ^{\left(  \alpha-1\right)
/\alpha}\right\} \\
&  =-\operatorname{Tr}\left\{  Y^{\left(  1-\alpha\right)  /\alpha}X^{\left(
2\alpha-1\right)  /\alpha}\right\}  \left[  \operatorname{Tr}\{X\}\right]
^{\left(  1-\alpha\right)  /\alpha}.
\end{align}
This inequality implies that%
\begin{equation}
\log\left\Vert X^{1/2}Y^{\left(  1-\alpha\right)  /\alpha}X^{1/2}\right\Vert
_{\alpha}\leq\log\operatorname{Tr}\left\{  X^{\left(  2\alpha-1\right)
/\alpha}Y^{\left(  1-\alpha\right)  /\alpha}\right\}  +\frac{1-\alpha}{\alpha
}\log\operatorname{Tr}\{X\}.
\end{equation}
Multiplying by $\frac{\alpha}{\alpha-1}$ leads to%
\begin{align}
\widetilde{D}_{\alpha}(X\Vert Y)  &  =\frac{\alpha}{\alpha-1}\log\left\Vert
X^{1/2}Y^{\left(  1-\alpha\right)  /\alpha}X^{1/2}\right\Vert _{\alpha}\\
&  \geq\frac{\alpha}{\alpha-1}\log\operatorname{Tr}\left\{  X^{\left(
2\alpha-1\right)  /\alpha}Y^{\left(  1-\alpha\right)  /\alpha}\right\}
-\log\operatorname{Tr}\{X\}\\
&  =\frac{1}{\left[  \frac{2\alpha-1}{\alpha}\right]  -1}\log\operatorname{Tr}%
\left\{  X^{\left(  2\alpha-1\right)  /\alpha}Y^{\left(  1-\alpha\right)
/\alpha}\right\}  -\log\operatorname{Tr}\{X\}\\
&  =D_{\frac{2\alpha-1}{\alpha}}(X\Vert Y)-\log\operatorname{Tr}\{X\}.
\end{align}
This establishes the claim for $\alpha\in\lbrack1/2,1)$. The proof for
$\alpha\in(1,\infty)$ is very similar.
\end{proof}

\section{Classical and classical--quantum cases}

\label{sec:c-cq}When the operators $X$ and $Y$ commute, the optimized
$f$-divergence takes on a simpler form, as stated in the following proposition:

\begin{proposition}
[Classical case]\label{prop:classical-case}Let $f$ be an operator
anti-monotone function with domain $(0,\infty)$ and range $\mathbb{R}$. Let
$X$ and $Y$ be positive semi-definite operators that commute, having spectral
decompositions%
\begin{equation}
X=\sum_{z}\lambda_{z}|z\rangle\langle z|,\qquad Y=\sum_{z}\mu_{z}%
|z\rangle\langle z|,
\end{equation}
for a common eigenbasis $\{|z\rangle\}_{z}$. Then%
\begin{equation}
\widetilde{Q}_{f}(X\Vert Y)=\sup_{\left\{  \tau_{z}\right\}  _{z}%
,\ \varepsilon>0}\left\{  \sum_{z\, :\, \mu_{z}\neq0}\lambda_{z}f(\mu_{z}%
/\tau_{z})+\sum_{z\, :\, \mu_{z}=0}\lambda_{z}f(\varepsilon/\tau_{z}):\tau
_{z}>0\ \forall z,\ \sum_{z}\tau_{z}=1\right\}  .
\end{equation}

\end{proposition}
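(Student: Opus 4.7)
The plan is to start from the spectral formula~\eqref{eq:f-divergence-spec-form} for $\widetilde{Q}_f(X\Vert Z;\tau)$ applied to $Z = Y+\varepsilon\Pi_Y^\perp$, and then show that the optimal $\tau$ in Definition~\ref{def:opt-f-div} may be taken diagonal in the common eigenbasis $\{|z\rangle\}_z$, at which point the expression collapses to the claimed classical sum.

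Concretely, since $X$ and $Y$ commute, the eigenprojections of $Z = Y+\varepsilon\Pi_Y^\perp$ are $\{|z\rangle\langle z|\}_z$ with eigenvalues $\tilde{\mu}_z \equiv \mu_z$ when $\mu_z\neq 0$ and $\tilde{\mu}_z \equiv \varepsilon$ when $\mu_z=0$. For arbitrary $\tau>0$ with spectral decomposition $\tau=\sum_t \nu_t|\psi^t\rangle\langle\psi^t|$, formula~\eqref{eq:f-divergence-spec-form} together with $X^{1/2}|z\rangle=\sqrt{\lambda_z}|z\rangle$ gives
\begin{equation}
\widetilde{Q}_f(X\Vert Z;\tau) = \sum_{z,t} f(\tilde{\mu}_z/\nu_t)\,\lambda_z\, p_{z,t}, \qquad p_{z,t}\equiv |\langle z|\psi^t\rangle|^2.
\end{equation}
Because $\{|\psi^t\rangle\}_t$ is an orthonormal basis, $\sum_t p_{z,t}=1$ for each $z$, so $\{p_{z,t}\}_t$ is a probability distribution. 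Define $\tau_z\equiv\sum_t \nu_t p_{z,t}$; then $\tau_z>0$ for every $z$ and $\sum_z \tau_z = \operatorname{Tr}\{\tau\}=1$.

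The key observation is that $g(x)\equiv f(c/x)$ is concave on $(0,\infty)$ for every $c>0$. Indeed, because $f$ is operator anti-monotone with domain $(0,\infty)$ and range $\mathbb{R}$, the function $f(x^{-1})$ is operator monotone on $(0,\infty)$ and hence operator concave (as noted earlier in the paper, and in, e.g., \cite{H13}); scaling the argument preserves concavity. Applying scalar Jensen's inequality to the inner $t$-sum yields
\begin{equation}
\sum_t f(\tilde{\mu}_z/\nu_t)\,p_{z,t} \leq f\!\left(\tilde{\mu}_z\Big/\sum_t \nu_t p_{z,t}\right) = f(\tilde{\mu}_z/\tau_z),
\end{equation}
so that $\widetilde{Q}_f(X\Vert Z;\tau)\leq \sum_z \lambda_z f(\tilde{\mu}_z/\tau_z)$, and the supremum over $\tau>0$ with $\operatorname{Tr}\{\tau\}=1$ is bounded above by the supremum over diagonal $\{\tau_z\}_z$ appearing in the statement. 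Conversely, taking $\tau=\sum_z \tau_z |z\rangle\langle z|$ diagonal in the common eigenbasis makes $p_{z,t}$ a Kronecker delta, reducing the double sum to $\sum_z \lambda_z f(\tilde{\mu}_z/\tau_z)$ and showing the bound is attained. Splitting this diagonal sum according to whether $\mu_z = 0$ and then taking the supremum over $\varepsilon>0$ yields the claimed formula.

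The only nontrivial step is justifying the reduction to diagonal $\tau$, and this is handled by the concavity of $f(c/x)$ noted above; all other manipulations are direct bookkeeping with the spectral formula.
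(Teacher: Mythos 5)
Your proposal is correct and follows essentially the same route as the paper: both reduce via the spectral formula \eqref{eq:f-divergence-spec-form} to a double sum over $(z,t)$ and then apply Jensen's inequality for the concave function $x\mapsto f(\tilde{\mu}_z/x)$ (concavity coming from operator anti-monotonicity of $f$, hence operator monotonicity and concavity of $f(x^{-1})$) to show the pinched diagonal $\tau_z=\langle z|\tau|z\rangle$ does at least as well. Your write-up is slightly more explicit than the paper's in spelling out the attainment by diagonal $\tau$ and the treatment of the $\mu_z=0$ eigenvalues via $\varepsilon$, but these are bookkeeping details rather than a different argument.
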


\begin{proof}
For simplicity, we prove the statement for the case in which $Y$ is
invertible, and then the extension to non-invertible $Y$ is straightforward.
For a spectral decomposition of $\tau$ as $\tau=\sum_{t}\nu_{t}|\phi
_{t}\rangle\langle\phi_{t}|$ and by applying
\eqref{eq:f-divergence-spec-form}, we find that%
\begin{align}
\widetilde{Q}_{f}(X\Vert Y;\tau)  &  =\sum_{z,t}f(\mu_{z}/\nu_{t})|\langle
z|X^{1/2}|\phi_{t}\rangle|^{2}\\
&  =\sum_{z,t}f((\nu_{t}/\mu_{z})^{-1})\lambda_{z}|\langle z|\phi_{t}%
\rangle|^{2}\\
&  \leq\sum_{z}f((\tau_{z}/\mu_{z})^{-1})\lambda_{z},
\end{align}
where $\tau_{z}=\sum_{t}|\langle z|\phi_{t}\rangle|^{2}\nu_{t}=\langle
z|\tau|z\rangle$. The inequality follows because the function $f(x^{-1})$ is
concave, due to the assumption that $f$ is operator anti-monotone with domain
$(0,\infty)$ and range $\mathbb{R}$.
\end{proof}

\bigskip

If $X$ and $Y$ have a classical--quantum form, as follows%
\begin{equation}
X=\sum_{z}|z\rangle\langle z|\otimes X^{z},\qquad Y=\sum_{z}|z\rangle\langle
z|\otimes Y^{z}, \label{eq:cq-form}%
\end{equation}
where $\{|z\rangle\}_{z}$ is an orthonormal basis and $\{X^{z}\}_{z}$ and
$\{Y^{z}\}_{z}$ are sets of positive semi-definite operators, then the
optimized $f$-divergence simplifies as well, generalizing
Proposition~\ref{prop:classical-case}.\ That is, it suffices to optimize over
positive definite states $\tau$ respecting the same classical--quantum form:

\begin{proposition}
[Classical--quantum case]Let $f$ be an operator anti-monotone function with
domain $(0,\infty)$ and range $\mathbb{R}$. Let $X$ and $Y$ be positive
semi-definite, having the classical--quantum form in \eqref{eq:cq-form}. Then%
\begin{equation}
\widetilde{Q}_{f}(X\Vert Y)=\sup_{\{\hat{\tau}^{z}\}_{z},\ \varepsilon>0}%
\sum_{z}\widetilde{Q}_{f}(X^{z}\Vert Y^{z}+\varepsilon\Pi_{Y_{z}}^{\perp}%
;\hat{\tau}^{z}),
\end{equation}
where each $\hat{\tau}^{z}$ is positive definite such that $\sum
_{z}\operatorname{Tr}\{\hat{\tau}^{z}\}=1$.
\end{proposition}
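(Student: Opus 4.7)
\bigskip

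\noindent\textbf{Proof proposal.} The plan is to exploit the symmetry of $X$ and $Y$ under the dephasing of the classical register $Z$ in order to reduce the supremum over arbitrary $\tau$ to a supremum over $\tau$ of the same classical--quantum form, and then to compute the resulting quantity block by block. As in the previous proposition, we may assume $\operatorname{Tr}\{\tau\}=1$.

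First, I would show that the supremum in Definition~\ref{def:opt-f-div} may be restricted to $\tau$ of the form $\sum_z |z\rangle\langle z|_Z\otimes \hat\tau^z_A$. Let $\mathcal{C}$ denote the pinching channel in the $Z$ basis, expressed in the standard Weyl form $\mathcal{C}(\tau)=\frac{1}{|Z|}\sum_k U_k\tau U_k^\dag$ with $U_k=\big(\sum_z e^{2\pi i kz/|Z|}|z\rangle\langle z|\big)_Z\otimes I_A$. Because $X$ and $Y$ are both block-diagonal in the $Z$ basis, $U_k$ commutes with $X$ and $Y$, and hence $\bar U_k=(U_k^\dag)^T$ commutes with $X^T$ and $Y^T$. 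A short calculation using the identity $(A\otimes I)|\Gamma\rangle=(I\otimes A^T)|\Gamma\rangle$ shows that $(U_k\otimes\bar U_k)|\varphi^X\rangle_{S\hat S}=|\varphi^X\rangle_{S\hat S}$, and that conjugating the argument of $f$ by $U_k\otimes\bar U_k$ transforms $\tau^{-1}_S\otimes(Y+\varepsilon\Pi_Y^\perp)^T_{\hat S}$ into $(U_k\tau U_k^\dag)^{-1}_S\otimes(Y+\varepsilon\Pi_Y^\perp)^T_{\hat S}$. Consequently, $\widetilde{Q}_f(X\Vert Y+\varepsilon\Pi_Y^\perp;U_k\tau U_k^\dag)=\widetilde{Q}_f(X\Vert Y+\varepsilon\Pi_Y^\perp;\tau)$. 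Combined with the concavity in $\tau$ proved in the previous proposition, Jensen's inequality gives
\begin{equation}
\widetilde{Q}_f(X\Vert Y+\varepsilon\Pi_Y^\perp;\mathcal{C}(\tau))\ \geq\ \widetilde{Q}_f(X\Vert Y+\varepsilon\Pi_Y^\perp;\tau),
\end{equation}
so the supremum is attained in the limit over classical--quantum $\tau=\sum_z|z\rangle\langle z|_Z\otimes\hat\tau^z_A$ with $\hat\tau^z>0$ (positive definiteness of $\hat\tau^z=\langle z|\tau|z\rangle_Z$ follows directly from $\tau>0$) and $\sum_z\operatorname{Tr}\{\hat\tau^z\}=1$.

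Second, for such a classical--quantum $\tau$, I would exploit the joint block structure. Choosing the product basis $\{|z\rangle_Z|a\rangle_A\}$ on $\mathcal{H}_S$, one has $|\Gamma\rangle_{S\hat S}=|\Gamma\rangle_{Z\hat Z}\otimes|\Gamma\rangle_{A\hat A}$ and therefore $|\varphi^X\rangle_{S\hat S}=\sum_z|z\rangle_Z|z\rangle_{\hat Z}\otimes|\varphi^{X^z}\rangle_{A\hat A}$. Since $\Pi_Y^\perp=\sum_z|z\rangle\langle z|_Z\otimes\Pi_{Y^z}^\perp$, the operator $\tau^{-1}_S\otimes(Y+\varepsilon\Pi_Y^\perp)^T_{\hat S}$ is block-diagonal with blocks labelled by $(z,\hat z)$, and the only blocks that are ``hit'' when sandwiching by $|\varphi^X\rangle$ are those with $z=\hat z$. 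Applying $f$ preserves this block-diagonal structure, so one directly obtains
\begin{equation}
\widetilde{Q}_f(X\Vert Y+\varepsilon\Pi_Y^\perp;\tau)\ =\ \sum_z\widetilde{Q}_f(X^z\Vert Y^z+\varepsilon\Pi_{Y^z}^\perp;\hat\tau^z).
\end{equation}
Taking the supremum over admissible $\{\hat\tau^z\}$ and over $\varepsilon>0$ yields the claimed equality.

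The main obstacle is the first step, specifically establishing the unitary invariance $\widetilde{Q}_f(X\Vert Y+\varepsilon\Pi_Y^\perp;U_k\tau U_k^\dag)=\widetilde{Q}_f(X\Vert Y+\varepsilon\Pi_Y^\perp;\tau)$, since the transpose on the auxiliary copy prevents an immediate cancellation of $U_k$. The remedy is the pairing of $U_k$ on $S$ with $\bar U_k$ on $\hat S$, whose invariance of $|\varphi^X\rangle$ is verified via the $|\Gamma\rangle$ identity together with $[U_k,X]=0$. Once this symmetry is in hand, everything else is routine: the concavity in $\tau$ is already available, and the block decomposition in the second step is purely algebraic.
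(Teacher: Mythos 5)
Your proposal is correct. Both your argument and the paper's reduce the optimization to classical--quantum $\tau$ via a pinching in the $Z$ basis and then decompose block by block (your second step is essentially identical to the paper's computation showing $\widetilde{Q}_{f}(X\Vert Y;\tau)=\sum_{z}\widetilde{Q}_{f}(X^{z}\Vert Y^{z};\hat{\tau}^{z})$ for classical--quantum $\tau$). Where you differ is in how the pinching inequality is obtained. The paper works inside the matrix element: it inserts the pinching $\mathcal{D}_{\hat{Z}}$ using block-diagonality of $f(\tau_{ZA}^{-1}\otimes Y_{\hat{Z}\hat{A}}^{T})$ in $\hat{Z}$, transfers it to the $Z$ system via the transpose trick and invariance of $X$ under $\mathcal{D}_{Z}$, and then applies the operator Jensen inequality for $g(x)=f(x^{-1})$ to pull $\mathcal{D}_{Z}$ inside the argument of $g$, yielding $\widetilde{Q}_{f}(X\Vert Y;\tau)\leq\widetilde{Q}_{f}(X\Vert Y;\mathcal{D}_{Z}(\tau))$ as an operator-level statement. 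You instead establish the exact covariance $\widetilde{Q}_{f}(X\Vert Y+\varepsilon\Pi_{Y}^{\perp};U_{k}\tau U_{k}^{\dag})=\widetilde{Q}_{f}(X\Vert Y+\varepsilon\Pi_{Y}^{\perp};\tau)$ for each Weyl unitary $U_{k}$ (your verification that $(U_{k}\otimes\bar{U}_{k})|\varphi^{X}\rangle=|\varphi^{X}\rangle$ and that the conjugation passes through $f$ and through the transpose, using $[U_{k},X]=[U_{k},Y]=[U_{k},\Pi_{Y}^{\perp}]=0$, is sound), and then invoke the scalar concavity of $\tau\mapsto\widetilde{Q}_{f}(X\Vert Y+\varepsilon\Pi_{Y}^{\perp};\tau)$ already proved in the paper's first proposition, applying ordinary Jensen to the uniform average $\mathcal{C}(\tau)=\frac{1}{|Z|}\sum_{k}U_{k}\tau U_{k}^{\dag}$. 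The two routes rest on the same underlying fact (operator concavity of $f(x^{-1})$), but yours is more modular: it confines the transpose bookkeeping to a single symmetry verification and then reuses an established concavity statement, at the price of introducing the Weyl representation of the pinching; the paper's is more self-contained at the operator level. Both are valid, and your normalization and positivity checks for $\hat{\tau}^{z}=\langle z|\tau|z\rangle_{Z}$ are correct.
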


\begin{proof}
The main idea here is to show that the optimal $\tau$ takes on a
classical--quantum form as well, as $\tau=\sum_{z}|z\rangle\langle
z|\otimes\tau^{z}$. This follows from an application of the operator Jensen
inequality \cite{HP03}, as shown below. We focus on the case in which each
$Y^{z}$ is invertible. We adopt system labels $Z$ for the classical system and
$A$ for the quantum system. For a given positive definite $\tau$ with
$\operatorname{Tr}\{\tau\}=1$, we have that%
\begin{align}
\widetilde{Q}_{f}(X\Vert Y;\tau)  &  =\langle\varphi^{X}|_{ZA\hat{Z}\hat{A}%
}f(\tau_{ZA}^{-1}\otimes Y_{\hat{Z}\hat{A}}^{T})|\varphi^{X}\rangle_{ZA\hat
{Z}\hat{A}}\\
&  =\langle\varphi^{X}|_{ZA\hat{Z}\hat{A}}f\left(  \tau_{ZA}^{-1}\otimes
\sum_{z}|z\rangle\langle z|_{\hat{Z}}\otimes(Y_{\hat{A}}^{z})^{T}\right)
|\varphi^{X}\rangle_{ZA\hat{Z}\hat{A}}\\
&  =\langle\varphi^{X}|_{ZA\hat{Z}\hat{A}}\sum_{z}|z\rangle\langle z|_{\hat
{Z}}\otimes f\left(  \tau_{ZA}^{-1}\otimes(Y_{\hat{A}}^{z})^{T}\right)
|\varphi^{X}\rangle_{ZA\hat{Z}\hat{A}}.
\end{align}
Consider that $|z\rangle\langle z|_{\hat{Z}}$ is invariant under the action of
the decoherence or \textquotedblleft pinching\textquotedblright\ channel%
\begin{equation}
(\cdot)\rightarrow\mathcal{D}_{\hat{Z}}(\cdot)=\sum_{z}|z\rangle\langle
z|_{\hat{Z}}(\cdot)|z\rangle\langle z|_{\hat{Z}}.
\end{equation}
This implies that%
\begin{equation}
\langle\varphi^{X}|_{ZA\hat{Z}\hat{A}}f(\tau_{ZA}^{-1}\otimes Y_{\hat{Z}%
\hat{A}}^{T})|\varphi^{X}\rangle_{ZA\hat{Z}\hat{A}}=\langle\varphi
^{X}|_{ZA\hat{Z}\hat{A}}\mathcal{D}_{\hat{Z}}\left[  f(\tau_{ZA}^{-1}\otimes
Y_{\hat{Z}\hat{A}}^{T})\right]  |\varphi^{X}\rangle_{ZA\hat{Z}\hat{A}}.
\end{equation}
By \eqref{eq:transpose-trick}, the fact that $X_{ZA}=\mathcal{D}_{Z}(X_{ZA})$,
and defining $g(x)=f(x^{-1})$, we find that%
\begin{align}
\langle\varphi^{X}|_{ZA\hat{Z}\hat{A}}\mathcal{D}_{\hat{Z}}\left[  f(\tau
_{ZA}^{-1}\otimes Y_{\hat{Z}\hat{A}}^{T})\right]  |\varphi^{X}\rangle
_{ZA\hat{Z}\hat{A}}  &  =\langle\varphi^{X}|_{ZA\hat{Z}\hat{A}}\mathcal{D}%
_{Z}\left[  f(\tau_{ZA}^{-1}\otimes Y_{\hat{Z}\hat{A}}^{T})\right]
|\varphi^{X}\rangle_{ZA\hat{Z}\hat{A}}\\
&  =\langle\varphi^{X}|_{ZA\hat{Z}\hat{A}}\mathcal{D}_{Z}\left[  g(\tau
_{ZA}\otimes(Y_{\hat{Z}\hat{A}}^{T})^{-1})\right]  |\varphi^{X}\rangle
_{ZA\hat{Z}\hat{A}}\\
&  \leq\langle\varphi^{X}|_{ZA\hat{Z}\hat{A}}\left[  g(\mathcal{D}_{Z}%
(\tau_{ZA})\otimes(Y_{\hat{Z}\hat{A}}^{T})^{-1})\right]  |\varphi^{X}%
\rangle_{ZA\hat{Z}\hat{A}}\\
&  =\langle\varphi^{X}|_{ZA\hat{Z}\hat{A}}f(\left[  \mathcal{D}_{Z}(\tau
_{ZA})\right]  ^{-1}\otimes Y_{\hat{Z}\hat{A}}^{T})|\varphi^{X}\rangle
_{ZA\hat{Z}\hat{A}}.
\end{align}
The inequality follows from the operator Jensen inequality \cite{HP03} and the
fact that $g(x)$ is operator concave with domain $(0,\infty)$ and range
$\mathbb{R}$. Consider that%
\begin{equation}
\mathcal{D}_{Z}(\tau_{ZA})=\sum_{z}|z\rangle\langle z|_{Z}\otimes\hat{\tau
}_{A}^{z},
\end{equation}
for some $\{\tau^{z}\}_{z}$, where each $\hat{\tau}_{A}^{z}$ is positive
definite and $\sum_{z}\operatorname{Tr}\{\hat{\tau}_{A}^{z}\}=1$. Now consider
that%
\begin{align}
&  \langle\varphi^{X}|_{ZA\hat{Z}\hat{A}}f(\left[  \mathcal{D}_{Z}(\tau
_{ZA})\right]  ^{-1}\otimes Y_{\hat{Z}\hat{A}}^{T})|\varphi^{X}\rangle
_{ZA\hat{Z}\hat{A}}\nonumber\\
&  =\sum_{z,z^{\prime}}\langle\varphi^{X}|_{ZA\hat{Z}\hat{A}}\otimes
|z\rangle\langle z|_{Z}\otimes|z^{\prime}\rangle\langle z^{\prime}|_{\hat{Z}%
}\otimes f\left(  (\hat{\tau}_{A}^{z})^{-1}\otimes(Y_{\hat{A}}^{z^{\prime}%
})^{T}\right)  |\varphi^{X}\rangle_{ZA\hat{Z}\hat{A}}\\
&  =\sum_{z}\langle\varphi^{X^{z}}|_{A\hat{A}}f\left(  (\hat{\tau}_{A}%
^{z})^{-1}\otimes(Y_{\hat{A}}^{z})^{T}\right)  |\varphi^{X^{z}}\rangle
_{A\hat{A}}\\
&  =\sum_{z}\widetilde{Q}_{f}(X^{z}\Vert Y^{z};\hat{\tau}^{z}),
\end{align}
where the second-to-last line follows because%
\begin{align}
|\varphi^{X}\rangle_{ZA\hat{Z}\hat{A}}  &  =(X_{ZA}^{1/2}\otimes I_{\hat
{Z}\hat{A}})|\Gamma\rangle_{Z\hat{Z}}|\Gamma\rangle_{A\hat{A}}\\
&  =\sum_{z}|z\rangle\langle z|_{Z}\otimes(X_{A}^{z})^{1/2}|\Gamma
\rangle_{Z\hat{Z}}|\Gamma\rangle_{A\hat{A}}\\
&  =\sum_{z}|z\rangle_{Z}|z\rangle_{\hat{Z}}(X_{A}^{z})^{1/2}|\Gamma
\rangle_{A\hat{A}}\\
&  =\sum_{z}|z\rangle_{Z}|z\rangle_{\hat{Z}}|\varphi^{X^{z}}\rangle_{A\hat{A}%
}.
\end{align}
This completes the proof after optimizing over $\{\hat{\tau}^{z}\}_{z}$
satisfying $\sum_{z}\operatorname{Tr}\{\hat{\tau}^{z}\}=1$. We handle the case
of non-invertible $Y$ by taking a supremum over $\varepsilon>0$ at the end.
\end{proof}

\section{Optimized quantum $f$-information measures}

\label{sec:f-info}It is well known that the quantum relative entropy is a
parent quantity for many information measures used in quantum information
theory (see, e.g., \cite{D11} or \cite[Chapter~11]{W17}). As such, once one
has a base relative entropy or divergence to work with, there is now a
relatively standard recipe for generating other information measures, such as
entropy, conditional entropy, coherent information, mutual information,
entanglement measures, and more generally resource measures. This method has
been used in many works now
\cite{VP98,D09,S10,MH11,WWY13,MDSFT13,B13monotone,GW15,TWW14,WTB16,KW17}. Each
of the resulting quantities then satisfies a particular kind of quantum data
processing inequality, which follows as a consequence of the monotonocity of
the underlying relative entropy.

With the above in mind, we now mention some different information measures
that can be derived from the optimized $f$-divergence and we state the data
processing inequality that they satisfy. In what follows, $\rho_{AB}$ is a
density operator acting on a tensor-product Hilbert space $\mathcal{H}%
_{A}\otimes\mathcal{H}_{B}$, $\rho_{A}=\operatorname{Tr}_{B}\{\rho_{AB}\}$,
and $f$ is an operator anti-monotone function with domain $(0,\infty)$ and
range $\mathbb{R}$. Let $\mathcal{W}_{A\rightarrow A^{\prime}}$ denote a
subunital channel, satisfying $\mathcal{W}_{A\rightarrow A^{\prime}}%
(I_{A})\leq I_{A^{\prime}}$, and let $\mathcal{N}_{A\rightarrow A^{\prime}}$
and $\mathcal{M}_{B\rightarrow B^{\prime}}$ be quantum channels. All
statements about data processing follow from Theorem~\ref{thm:DP}\ and some
slight extra reasoning (see, e.g., \cite[Section~11.9]{W17}). One can find
various operational interpretations of entropic quantities discussed in
\cite{W17,H06book,H12}.

\begin{enumerate}
\item The optimized $f$-entropy is defined as%
\begin{equation}
\widetilde{S}_{f}(A)_{\rho}\equiv\widetilde{S}_{f}(\rho_{A})\equiv
-\widetilde{Q}_{f}(\rho_{A}\Vert I_{A}).
\end{equation}
It does not decrease under the action of a subunital channel $\mathcal{W}%
_{A\rightarrow A^{\prime}}$, in the sense that%
\begin{equation}
\widetilde{S}_{f}(A)_{\rho}\leq\widetilde{S}_{f}(A^{\prime})_{\mathcal{W}%
(\rho)}.
\end{equation}

\item The optimized $f$-mutual information is defined as%
\begin{equation}
\widetilde{I}_{f}(A;B)_{\rho}\equiv\inf_{\sigma_{B}}\widetilde{Q}_{f}%
(\rho_{AB}\Vert\rho_{A}\otimes\sigma_{B}).
\end{equation}
It does not increase under the action of the product channel $\mathcal{N}%
_{A\rightarrow A^{\prime}}\otimes\mathcal{M}_{B\rightarrow B^{\prime}}$, in
the sense that%
\begin{equation}
\widetilde{I}_{f}(A;B)_{\rho}\geq\widetilde{I}_{f}(A^{\prime};B^{\prime
})_{(\mathcal{N}\otimes\mathcal{M})(\rho)}.
\end{equation}

\item The optimized conditional $f$-entropy is defined as%
\begin{equation}
\widetilde{S}_{f}(A|B)_{\rho}\equiv-\inf_{\sigma_{B}}\widetilde{Q}_{f}%
(\rho_{AB}\Vert I_{A}\otimes\sigma_{B}).
\end{equation}
It does not decrease under the action of the product channel $\mathcal{W}%
_{A\rightarrow A^{\prime}}\otimes\mathcal{M}_{B\rightarrow B^{\prime}}$:%
\begin{equation}
\widetilde{S}_{f}(A|B)_{\rho}\leq\widetilde{S}_{f}(A|B)_{(\mathcal{W}%
\otimes\mathcal{M})(\rho)}.
\end{equation}

\item Related to the above, the optimized $f$-coherent information is defined
as%
\begin{equation}
\widetilde{I}_{f}(A\rangle B)_{\rho}\equiv-\widetilde{S}_{f}(A|B)_{\rho},
\end{equation}
and we have that%
\begin{equation}
\widetilde{I}_{f}(A\rangle B)_{\rho}\geq\widetilde{I}_{f}(A\rangle
B)_{(\mathcal{W}\otimes\mathcal{M})(\rho)}.
\end{equation}

\item In recent years, there has been much activity surrounding quantum
resource theories \cite{BG15,fritz_2015,RKR17,KR17,CG18}. Such a resource theory
consists of a few basic elements.\ There is a set $\mathcal{F}$\ of free
quantum states, i.e., those that the players involved are allowed to access
without any cost. Related to these, there is a set of free channels, and they
should have the property that a free state remains free after a free channel
acts on it. Once these are defined, it follows that any state that is not free
is considered resourceful, i.e., useful in the context of the resource theory.
We can also then define a measure of the resourcefulness of a quantum state,
and some fundamental properties that it should satisfy are that 1) it should
be monotone non-increasing under the action of a free channel and 2)\ it
should be equal to zero when evaluated on a free state. A typical choice of a
resourcefulness measure of a state $\rho$ satisfying these requirements is the
relative entropy of resourcefulness, defined in terms of relative entropy
as\ $\inf_{\sigma\in F}D(\rho\Vert\sigma)$. We can thus consider an optimized
$f$-relative entropy of resourcefulness as%
\begin{equation}
\widetilde{R}_{f}(\rho)\equiv\inf_{\sigma\in\mathcal{F}}\widetilde{Q}_{f}%
(\rho\Vert\sigma),
\end{equation}
and it thus satisfies the following data processing inequality%
\begin{equation}
\widetilde{R}_{f}(\rho)\geq\widetilde{R}_{f}(\mathcal{N}(\rho)),
\end{equation}
whenever $\mathcal{N}$ is a free channel as described above.

\item We can extend all of the above measures to quantum channel measures by
optimizing over inputs to the channel. For example, optimized $f$-mutual
information of a channel $\mathcal{N}_{A\rightarrow B}$ is defined as%
\begin{equation}
\sup_{\psi_{RA}}\widetilde{I}_{f}(R;B)_{\omega},
\end{equation}
where $\omega_{RB}=\mathcal{N}_{A\rightarrow B}(\psi_{RA})$ and $\psi_{RA}$ is
a pure bipartite state. Due to the Schmidt decomposition theorem and data
processing, it suffices to optimize over pure bipartite states $\psi_{RA}$
with the reference system $R$ isomorphic to the channel input system $A$.
\end{enumerate}

\subsection{Duality of optimized conditional $f$-entropy}

This paper's final contribution is the following proposition, which
generalizes a well known duality relation for conditional quantum entropy:

\begin{proposition}
[Duality]Let $f$ be an operator anti-monotone function with domain
$(0,\infty)$ and range $\mathbb{R}$. For a pure state $|\psi\rangle\langle
\psi|_{ABC}$, we have that%
\begin{equation}
\widetilde{S}_{f}(A|B)_{\psi}=-\widetilde{S}_{k}(A|C)_{\psi},
\end{equation}
where $k(x)=-f(x^{-1})$.
\end{proposition}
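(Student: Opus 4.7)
The duality extends the familiar $S(A|B)_\psi=-S(A|C)_\psi$ for the von Neumann conditional entropy, and the known sandwiched R\'enyi duality under $1/\alpha+1/\beta=2$, to the full optimized-$f$-divergence framework. My approach is to transport both sides of the claimed identity onto the common purification $|\psi\rangle_{ABC}$ and match the resulting optimizations. Throughout, I first assume invertibility of the relevant operators, with the general case recovered from the $\varepsilon$-regularization in Definition~\ref{def:opt-f-div} and the continuity of $f$.

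First, I unpack the definitions. Using the identity $k(x)=-f(x^{-1})$ (and noting that $k$ is itself operator anti-monotone, because $x\mapsto x^{-1}$ is operator anti-monotone on $(0,\infty)$), the two sides rewrite as
\begin{align*}
\widetilde{S}_f(A|B)_\psi&=-\inf_{\sigma_B}\sup_{\tau_{AB}}\langle\varphi^{\psi_{AB}}|\,f\!\bigl(\tau_{AB}^{-1}\otimes I_{\hat A}\otimes\sigma_{\hat B}^{T}\bigr)|\varphi^{\psi_{AB}}\rangle,\\
-\widetilde{S}_k(A|C)_\psi&=-\sup_{\sigma_C}\inf_{\tau_{AC}}\langle\varphi^{\psi_{AC}}|\,f\!\bigl(\tau_{AC}\otimes I_{\hat A}\otimes\sigma_{\hat C}^{-T}\bigr)|\varphi^{\psi_{AC}}\rangle.
\end{align*}

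Next, I replace the canonical purifications by $|\psi\rangle_{ABC}$. After padding the dimensions of $B$ and $C$ to a common value, which is harmless by the isometric invariance of Proposition~\ref{prop:iso-inv}, the purification equivalence theorem supplies unitaries $W_B\colon C\to\hat A\hat B$ and $W_C\colon B\to\hat A\hat C$ satisfying $(I_{AB}\otimes W_B)|\psi\rangle_{ABC}=|\varphi^{\psi_{AB}}\rangle$ and $(I_{AC}\otimes W_C)|\psi\rangle_{ABC}=|\varphi^{\psi_{AC}}\rangle$. Since unitaries commute with the continuous functional calculus of self-adjoint operators, each expression becomes an expectation in $|\psi\rangle_{ABC}$ of the form $\langle\psi|f(\tau_{AB}^{-1}\otimes\widetilde\sigma_C)|\psi\rangle$ on the LHS (with $\widetilde\sigma_C:=W_B^\dagger(I_{\hat A}\otimes\sigma_{\hat B}^T)W_B$ acting on $C$) and $\langle\psi|f(\tau_{AC}\otimes\widetilde\sigma_B^{-1})|\psi\rangle$ on the RHS (with $\widetilde\sigma_B$ defined analogously on $B$).

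Finally, I would match the two optimizations by establishing the two inequalities separately. Given any admissible $(\sigma_B,\tau_{AB})$ on the LHS, I construct a dual pair $(\sigma_C,\tau_{AC})$ on the RHS achieving the same value of $\langle\psi|f(\cdot)|\psi\rangle$, using the transpose identity~\eqref{eq:transpose-trick} to trade operators on $\hat B$ for operators on $AC$ (and symmetrically for the reverse inequality). The identity $k(x)=-f(x^{-1})$ is then precisely what converts the inverted $\tau$ on the LHS into the direct appearance of $\tau$ on the RHS, swapping the inf-sup with the sup-inf. The main obstacle is this last step: one must verify that the dual optimizer lies in the correct admissible class (a positive, trace-normalized operator on the target system) rather than in a larger set, and must carefully track the transpose conventions, especially when $|B|\neq|C|$. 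The freedom in choosing the purifying unitary, combined with the fact that one only needs matching inequalities in each direction rather than a genuine bijection of optimization domains, should resolve this bookkeeping.
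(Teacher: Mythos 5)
Your overall route---transporting both sides onto the common purification $|\psi\rangle_{ABC}$ via the unitary relating purifications, using the transpose identity to move operators between a system and its purifying partner, and letting $k(x)=-f(x^{-1})$ convert one functional calculus into the other---is the same skeleton as the paper's proof. But there is a genuine gap at the step you defer to ``bookkeeping.'' After the transport, the left-hand side is an $\inf_{\sigma_B}\sup_{\tau}$ and the right-hand side is (after propagating the sign through $k$) a nested optimization in which the roles of the conditioning state and the auxiliary state $\tau$ are interchanged, with the opposite order of $\inf$ and $\sup$. A pointwise correspondence of optimizers cannot close this, for two reasons. First, the two inner arguments $\tau_{AB}^{-1}\otimes\widetilde{\sigma}_C$ and $\tau_{AC}\otimes\widetilde{\sigma}_B^{-1}$ factorize across different tensor cuts ($AB|C$ versus $AC|B$), so there is no bijection of admissible pairs giving equal values of $\langle\psi|f(\cdot)|\psi\rangle$; the transpose trick is exactly what is needed to realign the cuts, and it must be justified through the spectral formula \eqref{eq:f-divergence-spec-form} since $\rho_{AB}^{1/2}$ sits between $f(\cdot)$ and $|\Gamma\rangle$. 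Second, and more fundamentally, even after the cuts are aligned you are left needing $\inf_\sigma\sup_\tau=\sup_\tau\inf_\sigma$ for the same objective; the inequality $\inf\sup\geq\sup\inf$ is free, but the reverse is precisely a minimax statement. The identity $k(x)=-f(x^{-1})$ only flips signs and inverts arguments; it does not interchange the order of two nested optimizations over independent variables.

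The paper closes this gap by invoking the minimax theorem, which applies because the objective $\langle\varphi^{\rho_{AB}}|f(\tau_{AB}^{-1}\otimes I_{\hat{A}}\otimes\sigma_{\hat{B}}^{T})|\varphi^{\rho_{AB}}\rangle$ is convex in $\sigma_B$ (operator convexity of $f$) and concave in $\tau_{AB}$ (operator concavity of $f(x^{-1})$, established early in the paper), with both variables ranging over convex compact sets of density operators. Your proposal never records these convexity and concavity properties and never cites a minimax theorem, so the decisive inequality is unproved; the obstacle is not the admissibility or transpose conventions you flag. If you insert the minimax step before the transport, the rest of your outline (padding dimensions, unitary equivalence of purifications, the transpose trick, and the final identification with $\widetilde{Q}_{k}(\psi_{AC}\Vert I_{A}\otimes\tau_{C})$) goes through essentially as the paper does it.
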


\begin{proof}
The method of proof is related to that given in \cite{B13monotone,MDSFT13}.
Set $\rho_{AB}=\operatorname{Tr}_{C}\{|\psi\rangle\langle\psi|_{ABC}\}$ and
consider that%
\begin{align}
\widetilde{S}_{f}(A|B)_{\psi}  &  =-\inf_{\sigma_{B}}\widetilde{Q}_{f}%
(\rho_{AB}\Vert I_{A}\otimes\sigma_{B})\\
&  =-\inf_{\sigma_{B}}\sup_{\tau_{AB}}\langle\varphi^{\rho_{AB}}|_{AB\hat
{A}\hat{B}}f(\tau_{AB}^{-1}\otimes I_{\hat{A}}\otimes\sigma_{\hat{B}}%
^{T})|\varphi^{\rho_{AB}}\rangle_{AB\hat{A}\hat{B}}\\
&  =-\sup_{\tau_{AB}}\inf_{\sigma_{B}}\langle\varphi^{\rho_{AB}}|_{AB\hat
{A}\hat{B}}f(\tau_{AB}^{-1}\otimes I_{\hat{A}}\otimes\sigma_{\hat{B}}%
^{T})|\varphi^{\rho_{AB}}\rangle_{AB\hat{A}\hat{B}}\\
&  =-\sup_{\tau_{AB}}\inf_{\sigma_{B}}\langle\varphi^{\rho_{AB}}|_{AB\hat
{A}\hat{B}}f(I_{A}\otimes\sigma_{B}\otimes\left(  \tau_{\hat{A}\hat{B}}%
^{-1}\right)  ^{T})|\varphi^{\rho_{AB}}\rangle_{AB\hat{A}\hat{B}}\\
&  =-\sup_{\tau_{C}}\inf_{\sigma_{B}}\langle\psi|_{ABC}f(I_{A}\otimes
\sigma_{B}\otimes\left(  \tau_{C}^{-1}\right)  ^{T})|\psi\rangle_{ABC}%
\end{align}
The first two equalities follow by definition. For simplicity, we consider
$\sigma_{B}$ to be an invertible density operator. The third equality follows
from an application of the minimax theorem \cite{S58}, considering that $f(x)$
is operator convex and $f(x^{-1})$ is operator concave. The fourth equality
follows by applying
\eqref{eq:f-divergence-spec-form}--\eqref{eq:gamma-to-trace}. The fifth
equality follows because $|\varphi^{\rho_{AB}}\rangle_{AB\hat{A}\hat{B}}$ and
$|\psi\rangle_{ABC}$ are purifications of $\rho_{AB}$ and all purifications
are related by an isometry (see, e.g., \cite{W17}). Furthermore, we have
isometric invariance of the optimized $\langle\varphi^{\rho_{AB}}|_{AB\hat
{A}\hat{B}}f(I_{A}\otimes\sigma_{B}\otimes( \tau_{\hat{A}\hat{B}}^{-1})
^{T})|\varphi^{\rho_{AB}}\rangle_{AB\hat{A}\hat{B}}$ by the same reasoning as
given in the proof of Proposition~\ref{prop:iso-inv}. Continuing,%
\begin{align}
&  =-\sup_{\tau_{C}}\inf_{\sigma_{B}}\left[  -\langle\psi|_{ABC}%
\ k(I_{A}\otimes\sigma_{B}^{-1}\otimes\tau_{C}^{T})|\psi\rangle_{ABC}\right]
\\
&  =\inf_{\tau_{C}}\sup_{\sigma_{B}}\langle\psi|_{ABC}\ k (I_{A}\otimes
\sigma_{B}^{-1}\otimes\tau_{C}^{T})|\psi\rangle_{ABC}\\
&  =\inf_{\tau_{C}}\widetilde{Q}_{k}(\psi_{AC}\Vert I_{A}\otimes\tau_{C})\\
&  =-\widetilde{S}_{k}(A|C)_{\psi}.
\end{align}
The first equality follows from the definition of the function $k$. The second
equality follows from propagating the inside minus sign to the outside. The
last equalities follow from applying similar steps as in the beginning of the
proof and then the definitions of $\widetilde{Q}_{k}(\rho_{AC}\Vert
I_{A}\otimes\tau_{C})$ and $\widetilde{S}_{k}(A|C)_{\psi}$.
\end{proof}

When $f(x)=x^{\left(  1-\alpha\right)  /\alpha}$ for $\alpha\in(1,\infty]$, we
recover a duality relation similar to that for sandwiched R\'enyi relative
entropy \cite{B13monotone,MDSFT13}. Duality relations for conditional entropy
and the data processing inequality are known to be closely related to entropic
uncertainty relations \cite{CCYZ11}, so there could be interesting new ones to
develop by choosing more general operator anti-monotone functions.

\section{Conclusion}

\label{sec:conclusion}The main contribution of the present paper is the
definition of the optimized quantum $f$-divergence and the proof that the data
processing inequality holds for it whenever the function $f$ is operator
anti-monotone with domain $(0,\infty)$ and range $\mathbb{R}$. The proof of
the data processing inequality relies on the operator Jensen inequality
\cite{HP03}, and it bears some similarities to the original approach from
\cite{P85,P86,TCR09}. Furthermore, I showed how the sandwiched R\'enyi
relative entropies are particular examples of the optimized quantum
$f$-divergence. As such, one benefit of this paper is that there is now a
single, unified approach, based on the operator Jensen inequality \cite{HP03},
for establishing the data processing inequality for the Petz--R\'enyi and
sandwiched R\'enyi relative entropies, for the full range of parameters for
which it is known to hold. In the remainder of the paper, I considered other
aspects such as the classical case, the classical--quantum case, and
information measures that one could construct from the optimized $f$-divergence.

There are several directions that one could pursue going forward. Equation
\eqref{eq:rel-mod-op} represents the function underlying the optimized
$f$-divergence in the relative modular operator formalism---this should be
helpful in understanding the optimized $f$-divergence in more general
contexts. Combined with the methods of \cite{P85,P86} and the approach in this
paper, it is clear that the data processing inequality will hold in more
general contexts. It would also be interesting to show that the data
processing inequality holds for maps beyond quantum channels, such as the
Schwarz and stochastic maps considered in \cite{HMPB11}. I suspect that the
methods of \cite{HMPB11} and the present paper could be used to establish the
data processing inequality for more general classes of maps.

\bigskip

\textbf{Acknowledgements.} I thank Milan Mosonyi and Anna Vershynina for
discussions related to the topic of this paper, and I acknowledge support from
the National Science Foundation under grant no.~1714215. I also thank the anonymous referees for many suggestions that helped to improve the paper.

\appendix

\section{Final case for Proposition~\ref{prop:mono-PT}}

\textbf{Case }$X_{A}$\textbf{ not invertible:}\ We now discuss how to extend
the proof detailed in the main text to the case in which $X_{A}$ is not
invertible (and thus $X_{AB}$ is not invertible either). In this case, with
$X_{A}^{-1/2}$ understood as a square-root inverse of $X_{A}$ on its support,
the Petz recovery map in \eqref{eq:petz-recovery}\ is no longer a quantum
channel, but it is instead a completely positive and trace non-increasing map.
A standard method for producing a quantum channel from the map in
\eqref{eq:petz-recovery}\ is to specify an additional action on the kernel of
$X_{A}$, as%
\begin{equation}
Z_{A}\rightarrow X_{AB}^{1/2}\left(  \left[  X_{A}^{-1/2}Z_{A}X_{A}%
^{-1/2}\right]  \otimes I_{B}\right)  X_{AB}^{1/2}+\operatorname{Tr}%
\{\Pi_{X_{A}}^{\perp}Z_{A}\}\xi_{AB}, \label{eq:extended-petz-channel}%
\end{equation}
where we take $\xi_{AB}$ to be an invertible density operator (see, e.g.,
\cite[Chapter~12]{W17} for this standard construction). One can check that the
map in \eqref{eq:extended-petz-channel} is completely positive and trace
preserving, and furthermore, an invertible input state leads to an invertible
output state. Our goal is now to find a Kraus decomposition for the above
quantum channel, so that we can work with its isometric extension as we did
previously. To begin with, suppose that the invertible state $\xi_{AB}$ has a
spectral decomposition as%
\begin{equation}
\xi_{AB}=\sum_{l=1}^{\left\vert A\right\vert \left\vert B\right\vert }%
p_{l}|\phi_{l}\rangle\langle\phi_{l}|_{AB},
\end{equation}
where $\{p_{l}\}_{l}$ is a probability distribution and $\{|\phi_{l}%
\rangle_{AB}\}_{l}$ is an orthonormal basis. Then we can write the channel in
\eqref{eq:extended-petz-channel}\ as%
\begin{multline}
\sum_{j=1}^{\left\vert B\right\vert }X_{AB}^{1/2}\left(  \left[  X_{A}%
^{-1/2}Z_{A}X_{A}^{-1/2}\right]  \otimes|j\rangle\langle j|_{B}\right)
X_{AB}^{1/2}+\sum_{k=1}^{\left\vert A\right\vert }\langle k|_{A}\Pi_{X_{A}%
}^{\perp}Z_{A}\Pi_{X_{A}}^{\perp}|k\rangle_{A}\xi_{AB}\\
=\sum_{j=1}^{\left\vert B\right\vert }X_{AB}^{1/2}\left[  X_{A}^{-1/2}%
\otimes|j\rangle_{B}\right]  Z_{A}\left[  X_{A}^{-1/2}\otimes\langle
j|_{B}\right]  X_{AB}^{1/2}\\
+\sum_{k=1}^{\left\vert A\right\vert }\sum_{l=1}^{\left\vert A\right\vert
\left\vert B\right\vert }\sqrt{p_{l}}|\phi_{l}\rangle_{AB}\langle k|_{A}%
\Pi_{X_{A}}^{\perp}Z_{A}\Pi_{X_{A}}^{\perp}|k\rangle_{A}\langle\phi_{l}%
|_{AB}\sqrt{p_{l}}.
\end{multline}
Thus, Kraus operators for it are as follows:%
\begin{equation}
\left\{  \left\{  X_{AB}^{1/2}\left[  X_{A}^{-1/2}\otimes|j\rangle_{B}\right]
\right\}  _{j=1}^{\left\vert B\right\vert },\left\{  \sqrt{p_{l}}|\phi
_{l}\rangle_{AB}\langle k|_{A}\Pi_{X_{A}}^{\perp}\right\}  _{l\in\left\{
1,\ldots,\left\vert A\right\vert \left\vert B\right\vert \right\}
,k\in\left\{  1,\ldots,\left\vert A\right\vert \right\}  }\right\}  .
\end{equation}
We now define an enlarged Hilbert space $\hat{C}$ to be the direct sum of
$\hat{B}$ and $\hat{A}$, and thus with dimension $|\hat{B}|+|\hat{A}|$, and an
orthonormal basis for it as%
\begin{equation}
\{|1\rangle_{\hat{C}},\ldots,|j\rangle_{\hat{C}},\ldots,||\hat{B}%
|\rangle_{\hat{C}},||\hat{B}|+1\rangle_{\hat{C}},\ldots,||\hat{B}%
|+k\rangle_{\hat{C}},\ldots,||\hat{B}|+|\hat{A}|\rangle_{\hat{C}}\}.
\end{equation}
We also define an auxiliary Hilbert space $E$ with orthonormal basis%
\begin{equation}
\{|e\rangle_{E},|1\rangle_{E},\ldots,||\hat{A}||\hat{B}|\rangle_{E}\},
\end{equation}
and we represent a purification $|\varphi^{\xi_{AB}}\rangle_{ABE}$ of the
state $\xi_{AB}$ as%
\begin{equation}
|\varphi^{\xi_{AB}}\rangle_{ABE}=\sum_{l=1}^{|\hat{A}||\hat{B}|}\sqrt{p_{l}%
}|\phi_{l}\rangle_{AB}|l\rangle_{E}.
\end{equation}
Thus, an isometric extension of the Petz recovery channel in
\eqref{eq:extended-petz-channel}, according to the standard recipe in
\eqref{eq:iso-from-kraus}, is given by%
\begin{multline}
\sum_{j=1}^{\left\vert B\right\vert }X_{AB}^{1/2}\left[  X_{A}^{-1/2}%
\otimes|j\rangle_{B}\right]  |j\rangle_{\hat{C}}|e\rangle_{E}+\sum
_{k=1}^{|\hat{A}|}\sum_{l=1}^{|\hat{A}||\hat{B}|}\sqrt{p_{l}}|\phi_{l}%
\rangle_{AB}\langle k|_{A}\Pi_{X_{A}}^{\perp}\otimes|k+\left\vert B\right\vert
\rangle_{\hat{C}}|l\rangle_{E}\\
=X_{AB}^{1/2}X_{A}^{-1/2}|\Gamma\rangle_{B\hat{C}}|e\rangle_{E}+|\varphi
^{\xi_{AB}}\rangle_{ABE}U_{A\rightarrow\hat{C}}\Pi_{X_{A}}^{\perp},
\end{multline}
where we set%
\begin{equation}
|\Gamma\rangle_{B\hat{C}}\equiv\sum_{j=1}^{\left\vert B\right\vert }%
|j\rangle_{B}|j\rangle_{\hat{C}},
\end{equation}
and we define the embedding map%
\begin{equation}
U_{A\rightarrow\hat{C}}\equiv\sum_{k=1}^{|\hat{A}|}|k+\left\vert B\right\vert
\rangle_{\hat{C}}\langle k|_{A}.
\end{equation}
So we set the isometry $V_{A\rightarrow B\hat{C}E}$ as%
\begin{equation}
V_{A\rightarrow B\hat{C}E}\equiv X_{AB}^{1/2}X_{A}^{-1/2}|\Gamma\rangle
_{B\hat{C}}|e\rangle_{E}+|\varphi^{\xi_{AB}}\rangle_{ABE}U_{A\rightarrow
\hat{C}}\Pi_{X_{A}}^{\perp}. \label{eq:iso-ext-2nd-case}%
\end{equation}
By construction, the operator $V_{A\rightarrow B\hat{C}E}$ is an isometry, but
we can verify by the following alternative calculation that this operator is
indeed an isometry:%
\begin{align}
V^{\dag}V  &  =\left(  \langle\Gamma|_{B\hat{C}}\langle e|_{E}X_{A}%
^{-1/2}X_{AB}^{1/2}+\Pi_{X_{A}}^{\perp}\left(  U_{A\rightarrow\hat{C}}\right)
^{\dag}\langle\varphi^{\xi_{AB}}|_{ABE}\right)  \times\nonumber\\
&  \qquad\left(  X_{AB}^{1/2}X_{A}^{-1/2}|\Gamma\rangle_{B\hat{C}}%
|e\rangle_{E}+|\varphi^{\xi_{AB}}\rangle_{ABE}U_{A\rightarrow\hat{C}}%
\Pi_{X_{A}}^{\perp}\right) \\
&  =\langle\Gamma|_{B\hat{C}}\langle e|_{E}X_{A}^{-1/2}X_{AB}^{1/2}%
X_{AB}^{1/2}X_{A}^{-1/2}|\Gamma\rangle_{B\hat{C}}|e\rangle_{E}\nonumber\\
&  \qquad+\Pi_{X_{A}}^{\perp}\left(  U_{A\rightarrow\hat{C}}\right)  ^{\dag
}\langle\varphi^{\xi_{AB}}|_{ABE}X_{AB}^{1/2}X_{A}^{-1/2}|\Gamma\rangle
_{B\hat{C}}|e\rangle_{E}\nonumber\\
&  \qquad+\langle\Gamma|_{B\hat{C}}\langle e|_{E}X_{A}^{-1/2}X_{AB}%
^{1/2}|\varphi^{\xi_{AB}}\rangle_{ABE}U_{A\rightarrow\hat{C}}\Pi_{X_{A}%
}^{\perp}\nonumber\\
&  \qquad+\Pi_{X_{A}}^{\perp}\left(  U_{A\rightarrow\hat{C}}\right)  ^{\dag
}\langle\varphi^{\xi_{AB}}|_{ABE}|\varphi^{\xi_{AB}}\rangle_{ABE}%
U_{A\rightarrow\hat{C}}\Pi_{X_{A}}^{\perp}\\
&  =\langle e|_{E}X_{A}^{-1/2}\langle\Gamma|_{B\hat{C}}X_{AB}|\Gamma
\rangle_{B\hat{C}}X_{A}^{-1/2}|e\rangle_{E}+\Pi_{X_{A}}^{\perp}\\
&  =X_{A}^{-1/2}X_{A}X_{A}^{-1/2}+\Pi_{X_{A}}^{\perp}\\
&  =\Pi_{X_{A}}+\Pi_{X_{A}}^{\perp}\\
&  =I_{A}.
\end{align}
In the above, we have used the fact that $\langle e|_{E}|\varphi^{\xi_{AB}%
}\rangle_{ABE}=0$. Now we extend $V$ to%
\begin{equation}
V_{A\hat{A}\rightarrow B\hat{C}E}\equiv V_{A\rightarrow B\hat{C}E}\otimes
I_{\hat{A}},
\end{equation}
and observe that%
\begin{equation}
V_{A\hat{A}\rightarrow B\hat{C}E}|\varphi^{X_{A}}\rangle_{A\hat{A}}%
=|\varphi^{X_{AB}}\rangle_{A\hat{A}B\hat{C}}|e\rangle_{E}=(X_{AB}^{1/2}\otimes
I_{\hat{A}\hat{C}})|\Gamma\rangle_{A\hat{A}}|\Gamma\rangle_{B\hat{C}}%
|e\rangle_{E}.
\end{equation}
Let $\tau_{AB}$ be the output of the Petz recovery channel when the invertible
state $\omega_{A}$ is input:%
\begin{equation}
\tau_{AB}=X_{AB}^{1/2}\left(  \left[  X_{A}^{-1/2}\omega_{A}X_{A}%
^{-1/2}\right]  \otimes I_{B}\right)  X_{AB}^{1/2}+\operatorname{Tr}%
\{\Pi_{X_{A}}^{\perp}\omega_{A}\}\xi_{AB}.
\end{equation}
Note that $\tau_{AB}$ is invertible because we chose $\xi_{AB}$ to be
invertible. Consider the positive definite operator $Y_{\hat{A}\hat{B}}$,
whose $\hat{B}$ system we embed into $\operatorname{span}\{|1\rangle_{\hat{C}%
},\ldots,|j\rangle_{\hat{C}},\ldots,||\hat{B}|\rangle_{\hat{C}}\}$ of system
$\hat{C}$, calling the embedded operator $Y_{\hat{A}\hat{C}}$. We then have
that $Y_{\hat{A}\hat{C}}^{T}U_{A\rightarrow\hat{C}}=0$, and so we find that%
\begin{align}
&  V^{\dag}\left(  \tau_{AB}^{-1}\otimes Y_{\hat{A}\hat{C}}^{T}\otimes
I_{E}\right)  V\nonumber\\
&  =\left[  \left(  \langle\Gamma|_{B\hat{C}}\langle e|_{E}X_{A}^{-1/2}%
X_{AB}^{1/2}+\Pi_{X_{A}}^{\perp}\left(  U_{A\rightarrow\hat{C}}\right)
^{\dag}\langle\varphi^{\xi_{AB}}|_{ABE}\right)  \otimes I_{\hat{A}}\right]
\times\nonumber\\
&  \qquad\left(  \tau_{AB}^{-1}\otimes Y_{\hat{A}\hat{C}}^{T}\otimes
I_{E}\right)  \left[  \left(  X_{AB}^{1/2}X_{A}^{-1/2}|\Gamma\rangle_{B\hat
{C}}|e\rangle_{E}+|\varphi^{\xi_{AB}}\rangle_{ABE}U_{A\rightarrow\hat{C}}%
\Pi_{X_{A}}^{\perp}\right)  \otimes I_{\hat{A}}\right] \\
&  =\langle\Gamma|_{B\hat{C}}\langle e|_{E}X_{A}^{-1/2}X_{AB}^{1/2}\left(
\tau_{AB}^{-1}\otimes Y_{\hat{A}\hat{C}}^{T}\otimes I_{E}\right)  X_{AB}%
^{1/2}X_{A}^{-1/2}|\Gamma\rangle_{B\hat{C}}|e\rangle_{E}\nonumber\\
&  \qquad+\Pi_{X_{A}}^{\perp}\left(  U_{A\rightarrow\hat{C}}\right)  ^{\dag
}\langle\varphi^{\xi_{AB}}|_{ABE}\left(  \tau_{AB}^{-1}\otimes Y_{\hat{A}%
\hat{C}}^{T}\otimes I_{E}\right)  X_{AB}^{1/2}X_{A}^{-1/2}|\Gamma
\rangle_{B\hat{C}}|e\rangle_{E}\nonumber\\
&  \qquad+\langle\Gamma|_{B\hat{C}}\langle e|_{E}X_{A}^{-1/2}X_{AB}%
^{1/2}\left(  \tau_{AB}^{-1}\otimes Y_{\hat{A}\hat{C}}^{T}\otimes
I_{E}\right)  |\varphi^{\xi_{AB}}\rangle_{ABE}U_{A\rightarrow\hat{C}}%
\Pi_{X_{A}}^{\perp}\nonumber\\
&  \qquad+\Pi_{X_{A}}^{\perp}\left(  U_{A\rightarrow\hat{C}}\right)  ^{\dag
}\langle\varphi^{\xi_{AB}}|_{ABE}\left(  \tau_{AB}^{-1}\otimes Y_{\hat{A}%
\hat{C}}^{T}\otimes I_{E}\right)  |\varphi^{\xi_{AB}}\rangle_{ABE}%
U_{A\rightarrow\hat{C}}\Pi_{X_{A}}^{\perp}.
\end{align}
The last three terms are equal to zero because $\langle\varphi^{\xi_{AB}%
}|_{ABE}|e\rangle_{E}=0$ and $Y_{\hat{A}\hat{C}}^{T}U_{A\rightarrow\hat{C}}%
=0$. Continuing, the last expression above is equal to%
\begin{align}
&  \langle\Gamma|_{B\hat{C}}\langle e|_{E}X_{A}^{-1/2}X_{AB}^{1/2}\left(
\tau_{AB}^{-1}\otimes Y_{\hat{A}\hat{C}}^{T}\otimes I_{E}\right)  X_{AB}%
^{1/2}X_{A}^{-1/2}|\Gamma\rangle_{B\hat{C}}|e\rangle_{E}\nonumber\\
&  =\langle\Gamma|_{B\hat{C}}\omega_{A}^{-1/2}\omega_{A}^{1/2}X_{A}%
^{-1/2}X_{AB}^{1/2}\left(  \tau_{AB}^{-1}\otimes Y_{\hat{A}\hat{C}}%
^{T}\right)  X_{AB}^{1/2}X_{A}^{-1/2}\omega_{A}^{1/2}\omega_{A}^{-1/2}%
|\Gamma\rangle_{B\hat{C}}\\
&  =\langle\Gamma|_{B\hat{C}}\left(  \omega_{A}^{-1/2}\left[  \omega_{A}%
^{1/2}X_{A}^{-1/2}X_{AB}^{1/2}\tau_{AB}^{-1}X_{AB}^{1/2}X_{A}^{-1/2}\omega
_{A}^{1/2}\right]  \omega_{A}^{-1/2}\otimes Y_{\hat{A}\hat{C}}^{T}\right)
|\Gamma\rangle_{B\hat{C}}\\
&  \leq\left\Vert \omega_{A}^{1/2}X_{A}^{-1/2}X_{AB}^{1/2}\tau_{AB}^{-1}%
X_{AB}^{1/2}X_{A}^{-1/2}\omega_{A}^{1/2}\right\Vert _{\infty}\langle
\Gamma|_{B\hat{C}}\left(  \omega_{A}^{-1/2}\omega_{A}^{-1/2}\otimes Y_{\hat
{A}\hat{C}}^{T}\right)  |\Gamma\rangle_{B\hat{C}}\\
&  =\left\Vert \tau_{AB}^{-1/2}X_{AB}^{1/2}X_{A}^{-1/2}\omega_{A}X_{A}%
^{-1/2}X_{AB}^{1/2}\tau_{AB}^{-1/2}\right\Vert _{\infty}\langle\Gamma
|_{B\hat{C}}\left(  \omega_{A}^{-1}\otimes Y_{\hat{A}\hat{C}}^{T}\right)
|\Gamma\rangle_{B\hat{C}}\\
&  \leq\langle\Gamma|_{B\hat{C}}\left(  \omega_{A}^{-1}\otimes Y_{\hat{A}%
\hat{C}}^{T}\right)  |\Gamma\rangle_{B\hat{C}}\\
&  =\omega_{A}^{-1}\otimes Y_{\hat{A}}^{T}.
\end{align}
The third equality follows because $\left\Vert Z^{\dag}Z\right\Vert _{\infty
}=\left\Vert ZZ^{\dag}\right\Vert _{\infty}$ for an operator $Z$. The last
inequality follows because for a positive semi-definite operator $W$, we have
that $\left\Vert W\right\Vert _{\infty}=\inf\left\{  \mu:W\leq\mu I\right\}
$. Applying this, we find that%
\begin{align}
&  \inf\left\{  \mu:\tau_{AB}^{-1/2}X_{AB}^{1/2}X_{A}^{-1/2}\omega_{A}%
X_{A}^{-1/2}X_{AB}^{1/2}\tau_{AB}^{-1/2}\leq\mu I_{AB}\right\} \nonumber\\
&  =\inf\left\{  \mu:X_{AB}^{1/2}X_{A}^{-1/2}\omega_{A}X_{A}^{-1/2}%
X_{AB}^{1/2}\leq\mu\ \tau_{AB}\right\} \\
&  =\inf\left\{  \mu:X_{AB}^{1/2}X_{A}^{-1/2}\omega_{A}X_{A}^{-1/2}%
X_{AB}^{1/2}\leq\mu\left[  X_{AB}^{1/2}X_{A}^{-1/2}\omega_{A}X_{A}%
^{-1/2}X_{AB}^{1/2}+\operatorname{Tr}\{\Pi_{X_{A}}^{\perp}\omega_{A}\}\xi
_{AB}\right]  \right\} \\
&  \leq1.
\end{align}

Let $P_{\hat{C}}$ be the embedding of the identity operator $I_{A}$ into the
subspace of $\hat{C}$ spanned by%
\begin{equation}
\{||\hat{B}|+1\rangle_{\hat{C}},\ldots,||\hat{B}|+k\rangle_{\hat{C}}%
,\ldots,||\hat{B}|+|\hat{A}|\rangle_{\hat{C}}\}.
\end{equation}
That is, $P_{\hat{C}}\equiv\sum_{k=1}^{|\hat{A}|}||\hat{B}|+k\rangle_{\hat{C}%
}\langle|\hat{B}|+k|_{\hat{C}}. $ Consider that, due to $P_{\hat{C}}%
|\Gamma\rangle_{B\hat{C}}=0$, we have that%
\begin{align}
&  V^{\dag}\left(  \tau_{AB}^{-1}\otimes I_{\hat{A}}\otimes P_{\hat{C}}\otimes
I_{E}\right)  V\nonumber\\
&  =\left[  \left(  \langle\Gamma|_{B\hat{C}}\langle e|_{E}X_{A}^{-1/2}%
X_{AB}^{1/2}+\Pi_{X_{A}}^{\perp}\left(  U_{A\rightarrow\hat{C}}\right)
^{\dag}\langle\varphi^{\xi_{AB}}|_{ABE}\right)  \otimes I_{\hat{A}}\right]
\times\nonumber\\
&  \qquad\left(  \tau_{AB}^{-1}\otimes I_{\hat{A}}\otimes P_{\hat{C}}\otimes
I_{E}\right)  \left[  \left(  X_{AB}^{1/2}X_{A}^{-1/2}|\Gamma\rangle_{B\hat
{C}}|e\rangle_{E}+|\varphi^{\xi_{AB}}\rangle_{ABE}U_{A\rightarrow\hat{C}}%
\Pi_{X_{A}}^{\perp}\right)  \otimes I_{\hat{A}}\right] \\
&  =\langle\Gamma|_{B\hat{C}}\langle e|_{E}X_{A}^{-1/2}X_{AB}^{1/2}\left(
\tau_{AB}^{-1}\otimes I_{\hat{A}}\otimes P_{\hat{C}}\otimes I_{E}\right)
X_{AB}^{1/2}X_{A}^{-1/2}|\Gamma\rangle_{B\hat{C}}|e\rangle_{E}\nonumber\\
&  \qquad+\Pi_{X_{A}}^{\perp}\left(  U_{A\rightarrow\hat{C}}\right)  ^{\dag
}\langle\varphi^{\xi_{AB}}|_{ABE}\left(  \tau_{AB}^{-1}\otimes I_{\hat{A}%
}\otimes P_{\hat{C}}\otimes I_{E}\right)  X_{AB}^{1/2}X_{A}^{-1/2}%
|\Gamma\rangle_{B\hat{C}}|e\rangle_{E}\nonumber\\
&  \qquad+\langle\Gamma|_{B\hat{C}}\langle e|_{E}X_{A}^{-1/2}X_{AB}%
^{1/2}\left(  \tau_{AB}^{-1}\otimes I_{\hat{A}}\otimes P_{\hat{C}}\otimes
I_{E}\right)  |\varphi^{\xi_{AB}}\rangle_{ABE}U_{A\rightarrow\hat{C}}%
\Pi_{X_{A}}^{\perp}\nonumber\\
&  \qquad+\Pi_{X_{A}}^{\perp}\left(  U_{A\rightarrow\hat{C}}\right)  ^{\dag
}\langle\varphi^{\xi_{AB}}|_{ABE}\left(  \tau_{AB}^{-1}\otimes I_{\hat{A}%
}\otimes P_{\hat{C}}\otimes I_{E}\right)  |\varphi^{\xi_{AB}}\rangle
_{ABE}U_{A\rightarrow\hat{C}}\Pi_{X_{A}}^{\perp}\\
&  =\Pi_{X_{A}}^{\perp}\left(  U_{A\rightarrow\hat{C}}\right)  ^{\dag}%
\langle\varphi^{\xi_{AB}}|_{ABE}\left(  \tau_{AB}^{-1}\otimes I_{\hat{A}%
}\otimes P_{\hat{C}}\otimes I_{E}\right)  |\varphi^{\xi_{AB}}\rangle
_{ABE}U_{A\rightarrow\hat{C}}\Pi_{X_{A}}^{\perp}\\
&  =\Pi_{X_{A}}^{\perp}\otimes I_{\hat{A}}\ \langle\varphi^{\xi_{AB}}%
|_{ABE}\tau_{AB}^{-1}|\varphi^{\xi_{AB}}\rangle_{ABE}\\
&  =\Pi_{X_{A}}^{\perp}\otimes I_{\hat{A}}\ \operatorname{Tr}\{\xi_{AB}%
\tau_{AB}^{-1}\}.
\end{align}
Observe that%
\begin{align}
f\left(  \tau_{AB}^{-1}\otimes Y_{\hat{A}\hat{B}}^{T}\right)   &  =\langle
e|_{E}\left[  f\left(  \tau_{AB}^{-1}\otimes Y_{\hat{A}\hat{B}}^{T}\right)
\otimes I_{E}\right]  |e\rangle_{E}\\
&  =\langle e|_{E}\left[  f\left(  \tau_{AB}^{-1}\otimes Y_{\hat{A}\hat{B}%
}^{T}\otimes I_{E}\right)  \right]  |e\rangle_{E}.
\end{align}
Furthermore, consider that for an $\varepsilon\in(0,1)$, we have that%
\begin{align}
&  \widetilde{Q}_{f}(X_{AB}\Vert Y_{AB};\tau_{AB})\nonumber\\
&  =\langle\varphi^{X_{AB}}|_{A\hat{A}B\hat{B}}f\left(  \tau_{AB}^{-1}\otimes
Y_{\hat{A}\hat{B}}^{T}\right)  |\varphi^{X_{AB}}\rangle_{A\hat{A}B\hat{B}}\\
&  =\langle\varphi^{X_{AB}}|_{A\hat{A}B\hat{B}}\langle e|_{E}\left[  f\left(
\tau_{AB}^{-1}\otimes Y_{\hat{A}\hat{B}}^{T}\otimes I_{E}\right)  \right]
|\varphi^{X_{AB}}\rangle_{A\hat{A}B\hat{B}}|e\rangle_{E}\\
&  =\langle\varphi^{X_{AB}}|_{A\hat{A}B\hat{C}}\langle e|_{E}\left[  f\left(
\tau_{AB}^{-1}\otimes\left(  Y_{\hat{A}\hat{C}}^{T}+\varepsilon\left[
I_{\hat{A}}\otimes P_{\hat{C}}\right]  \right)  \otimes I_{E}\right)  \right]
|\varphi^{X_{AB}}\rangle_{A\hat{A}B\hat{C}}|e\rangle_{E}\\
&  =\langle\varphi^{X_{A}}|_{A\hat{A}}V^{\dag}\left[  f\left(  \tau_{AB}%
^{-1}\otimes\left(  Y_{\hat{A}\hat{C}}^{T}+\varepsilon\left[  I_{\hat{A}%
}\otimes P_{\hat{C}}\right]  \right)  \otimes I_{E}\right)  \right]
V|\varphi^{X_{A}}\rangle_{A\hat{A}}\\
&  \geq\langle\varphi^{X_{A}}|_{A\hat{A}}f\left(  V^{\dag}\left[  \tau
_{AB}^{-1}\otimes\left(  Y_{\hat{A}\hat{C}}^{T}+\varepsilon\left[  I_{\hat{A}%
}\otimes P_{\hat{C}}\right]  \right)  \otimes I_{E}\right]  V\right)
|\varphi^{X_{A}}\rangle_{A\hat{A}}\\
&  \geq\langle\varphi^{X_{A}}|_{A\hat{A}}f\left(  \omega_{A}^{-1}\otimes
Y_{\hat{A}}^{T}+\varepsilon\left[  \Pi_{X_{A}}^{\perp}\otimes I_{\hat{A}%
}\ \operatorname{Tr}\{\xi_{AB}\tau_{AB}^{-1}\}\right]  \right)  |\varphi
^{X_{A}}\rangle_{A\hat{A}}%
\end{align}
The third equality follows because the term $\varepsilon\left[  I_{\hat{A}%
}\otimes P_{\hat{C}}\right]  $ gets zeroed out\ due to the sandwich by
$|\varphi^{X_{AB}}\rangle_{A\hat{A}B\hat{C}}$, given that $|\varphi^{X_{AB}%
}\rangle_{A\hat{A}B\hat{C}}$ only has support in $\operatorname{span}%
\{|1\rangle_{\hat{C}},\ldots,|j\rangle_{\hat{C}},\ldots,||\hat{B}%
|\rangle_{\hat{C}}\}$ (this can be seen explicitly by examining the proof of
Proposition~\ref{prop:iso-inv}). Furthermore, note that the operator
$Y_{\hat{A}\hat{C}}^{T}+\varepsilon\left[  I_{\hat{A}}\otimes P_{\hat{C}%
}\right]  $ is invertible. The first inequality follows from the operator
Jensen inequality \cite{HP03}. The next inequality follows because%
\begin{align}
&  V^{\dag}\left[  \tau_{AB}^{-1}\otimes\left(  Y_{\hat{A}\hat{C}}%
^{T}+\varepsilon\left[  I_{\hat{A}}\otimes P_{\hat{C}}\right]  \right)
\otimes I_{E}\right]  V\nonumber\\
&  =V^{\dag}\left[  \tau_{AB}^{-1}\otimes Y_{\hat{A}\hat{C}}^{T}\otimes
I_{E}\right]  V+V^{\dag}\left[  \tau_{AB}^{-1}\otimes\varepsilon\left[
I_{\hat{A}}\otimes P_{\hat{C}}\right]  \otimes I_{E}\right]  V\\
&  \leq\omega_{A}^{-1}\otimes Y_{\hat{A}}^{T}+\varepsilon\Pi_{X_{A}}^{\perp
}\otimes I_{\hat{A}}\ \operatorname{Tr}\{\xi_{AB}\tau_{AB}^{-1}\},
\end{align}
and by applying operator anti-monotonicity of $f$. This establishes the
inequality for all $\varepsilon\in(0,1)$. Thus, we can apply continuity of $f$
and take the limit $\varepsilon\searrow0$ to find that%
\begin{equation}
\widetilde{Q}_{f}(X_{AB}\Vert Y_{AB};\tau_{AB})\geq\langle\varphi^{X_{A}%
}|_{A\hat{A}}f\left(  \omega_{A}^{-1}\otimes Y_{\hat{A}}^{T}\right)
|\varphi^{X_{A}}\rangle_{A\hat{A}}=\widetilde{Q}_{f}(X_{A}\Vert Y_{A}%
;\omega_{A}).
\end{equation}
We can now take the supremum over all invertible states $\tau_{AB}$ to get the
following inequality holding for all invertible states $\omega_{A}$:%
\begin{equation}
\widetilde{Q}_{f}(X_{AB}\Vert Y_{AB})\geq\widetilde{Q}_{f}(X_{A}\Vert
Y_{A};\omega_{A}).
\end{equation}
After taking a supremum over invertible states $\omega_{A}$, we find that the
inequality in \eqref{eq:DP-partial-trace} holds when $X_{A}$ is not invertible.

\begin{remark}
\label{rem:cont-appeal} Several of the works \cite{P86,PS09,P10a,S10} on
quantum $f$-divergence consider only invertible density operators and then
appeal to continuity in order to extend proofs to the whole set of density
operators. This is often understood as simply adding $\varepsilon I$ to a
density operator and then taking the limit $\varepsilon\rightarrow0$ later. In
the second case given in the proof of Theorem~\ref{thm:DP}, in which $X_{AB}$
is not invertible but $X_{A}$ is, the method can be understood as falling
under an appeal to continuity. However, in the last case detailed above, when $X_{A}$ is
not invertible, the method arguably goes beyond a mere appeal to continuity,
given the construction of the channel in \eqref{eq:extended-petz-channel}, the
corresponding isometric extension in \eqref{eq:iso-ext-2nd-case}, and the
ensuing analysis.
\end{remark}

\bibliographystyle{alpha}
\bibliography{Ref}

\end{document}